\newtheorem{theorem}{Theorem}[section]
\newtheorem{claim}{Claim}
\newtheorem{lemma}[theorem]{Lemma}
\newtheorem{thm}[theorem]{Theorem}
\newtheorem{cor}[theorem]{Corollary}
\theoremstyle{definition}
\theoremstyle{remark}
\newenvironment{numberedlemma}[1]{%
\begin{lemma}}{\end{lemma}\addtocounter{theorem}{-1}}
\newcommand{\opt}{\textsc{OPT}}
\numberwithin{equation}{section}
\newcommand{\comment}[1]{}
\newcommand{\pr}[1]{\mathbb{P}\!\left(#1\right)}
\newcommand{\EE}[1]{\mathbb{E}\!\left[#1\right]}
\newcommand{\ba}{\mathbf a}
\newcommand{\bA}{\mathbf A}
\newcommand{\Paths}{\mathcal{P}}
\newcommand{\inn}{\mathrm{in}}
\newcommand{\outt}{\mathrm{out}}
\newcommand{\dmax}{d_{\max}}
\newcommand{\binaryTree}{\mathbf{T}}
\newcommand{\Trees}[1]{\mathcal{T}(#1)}
\newcommand{\stgeq}{ \overset {\rm st} \geq  }
\newcommand{\favj}{\textsc{Fav}_j}
\newcommand{\integers}{\mathbb{Z}}
\newcommand{\reals}{\mathbb{R}}
\newcommand{\lmax}{l_{\max}}
\newcommand{\Pathj}{\Pi_j}
\newcommand{\realPaths}{\mathcal{P}}
\newcommand{\Anc}{C}
\newcommand{\Btl}[1][t]{\tilde{B}_{#1,l}}
\newcommand{\layerl}[1][l]{\Gamma_{#1}}
\newcommand{\DD}{{\mathcal D}}
\newcommand{\di}{\text{dep}_i}
\begin{document}
\title{
Stability of Service under Time-of-Use Pricing}
\author{Shuchi Chawla\thanks{University of Wisconsin-Madison. \tt{shuchi@cs.wisc.edu}} \and Nikhil
  R. Devanur\thanks{Microsoft Research,
    Redmond. \tt{nikdev@microsoft.com}} \and Alexander E. Holroyd\thanks{Microsoft Research, Redmond. \tt{holroyd@microsoft.com}}  \and
  Anna R. Karlin\thanks{University of Washington,
    Seattle. \tt{karlin@cs.washington.edu}} \and James
  Martin\thanks{Oxford University. \tt{martin@stats.ox.ac.uk}}
  \and Balasubramanian Sivan\thanks{Google Research. \tt{balusivan@google.com}}}
\date{}
\maketitle
\begin{abstract}
We consider ``time-of-use" pricing as a technique for matching supply and demand of temporal resources with the goal of maximizing social welfare.
Relevant examples include energy, computing resources on a cloud computing platform, and charging stations for electric vehicles, among many others.
A client/job in this setting has a window of time during which he needs service, and a particular value for obtaining it. 
We assume a stochastic model for demand, where each job materializes with some probability via an independent Bernoulli trial.
Given a per-time-unit pricing of resources, any realized job will first try
to get served by the cheapest available resource in its window and, failing that, will try to find service at the next cheapest available resource, and so on. Thus, the natural stochastic fluctuations in demand  have the potential to lead to cascading overload events. 
Our main result shows that setting prices so as to optimally handle
the {\em expected} demand works well: with high probability, when the actual  demand is instantiated, the system is stable and the expected value of the jobs served is very close to that of the optimal offline algorithm.

\end{abstract}
\thispagestyle{empty}
\newpage
\setcounter{page}{1}

\section{Introduction}

For many commodities of a temporal nature, demand and supply fluctuate
stochastically over time. Demand for electricity changes over the course of a
day as well as across different days of the week---home owners use more
electricity during evenings and weekends, offices use more during normal working
hours, and energy usage at factories can follow an altogether different cycle
depending on workloads. On the other hand, supply from sources of renewable
energy depends on weather conditions and can also vary significantly over time.
Likewise, demand for computing resources on a cloud computing platform varies
over time depending on users' workloads. Supply also varies stochastically,
depending on scheduled and unscheduled downtime for servers and other kinds of
outages. In this paper, we explore the effectiveness of ``time-of-use" pricing
as a method for efficiently and effectively matching supply and demand in such settings. 

\paragraph{Online matching of temporal resources:}
Suppose that there are $B_t$ units of resource available at time $t$ and each potential client, a.k.a. ``job'', $j$ has a window of time during which it would like to obtain ``service'', say, a unit allocation of the resource. Job $j$ obtains a value of $v_j$ from getting serviced at any time in its window. (See figure below.) We consider the following model of job arrival that is a hybrid of stochastic and adversarial models: job $j$ is realized with probability $q_j$ via an independent Bernoulli trial. Jobs arrive online in the system in an adversarial order that can depend on the set of realized jobs.


\begin{figure}[htbp]
\begin{center}
\includegraphics[width=0.6\textwidth]{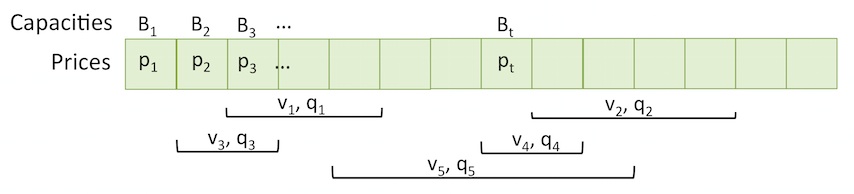}
\end{center}
\end{figure}

How should the supplier allocate the available resources to jobs so as
to maximize the total expected value\footnote{ Also called the efficiency or social welfare of the system.} of the jobs that are served? Perhaps the most natural approach is to use the stochastic information about demand
to price the available resources on a per-time-unit basis. Such
\textbf{time-of-use pricing} is an effective way for the supplier of a temporal commodity to balance supply and demand. During lean supply periods, advertising a high price suppresses demand, whereas during times of excess supply, advertising a low price encourages higher demand. 
Moreover, allowing each client to be a ``price-taker'', that is, making sure
that each client is allocated the cheapest
available resource that meets his requirements,\footnote{ A resource
at time $t$ priced at $p_t$ meets job $j$'s requirement if $p_t \le v_j$ and time slot $t$ is
within job $j$'s window.} trivially guarantees
that clients will be truthful about all of their parameters: there is no advantage to misreporting one's value or service window.

For a single time period in isolation, determining the right price to set is a \emph{newsvendor} problem~\cite{stevenson-om}.
The optimal solution is to set the price so that the system is slightly overprovisioned with the expected supply matching the expected demand plus a small reserve. 
Even for this setting, if jobs arrive in adversarial order, $\Omega ( \epsilon^{-2} \ln (\epsilon ^{-1}))$ units of resource are needed to guarantee
that the expected value of the jobs served is at least $1-\epsilon$ times
 the expected value of the jobs scheduled by the optimal offline algorithm.

For the general case, as a thought experiment, suppose that we only needed to satisfy supply constraints in expectation in every time period. This entails solving an ``expected LP'', which yields a set of prices, one for each period,
and automatically matches potential jobs with a cheapest slot in their window. But how well does such a system work under the natural stochastic fluctuations that will necessarily occur? The concern is that, because of variability in the realized
demand relative to the expected demand, a client may show up and find
that the cheapest slot in his window has already been allocated. This will cause him to try to take the next cheapest slot and so on. Such ``overload'' events, that is, events where demand exceeds supply causing excess demand to be forwarded, are positively correlated across time slots and can exhibit cascading behavior. 

Our main theorem is that such time-of-use pricing  works well with high probability: Suppose that $B_t$ resources are available at each time $t$, where $B_t = \Omega ( \epsilon^{-2} \ln (\epsilon^{-1}))$. (We call this the ``large market assumption.") Given the model of job arrivals described above, there is a set of prices $(p_t)$ such that if (a) realized jobs arrive online in an adversarial order,
and (b) upon arrival, each job grabs his favorite available resource given
the prices,  the expected value of the jobs served is at least $1-\epsilon$ times the expected value of the jobs scheduled by the optimal offline algorithm.

Thus, despite the complex interaction between demand for time slots due to the forwarding of unmet demand and the adversarial arrival order of realized jobs, we can guarantee near-optimal expected
performance without increasing capacity over what would be needed in a single time-unit setting.

\paragraph{Key ideas in the proof.} The prices we set induce a {\em forwarding graph}: the nodes
are time slots and an edge from time slot $t$ to time slot $t'$ means
that $p_t \le p_{t'}$ and some job might try to grab a resource
at time $t'$ immediately after failing to find an available resource at time $t$. 

What properties of the forwarding graph determine whether or not
overload cascades are likely? 
Perhaps unsurprisingly, the maximum in-degree of a node in the
forwarding graph is key. Suppose, for example, that one time slot $t$ has very high in-degree,
meaning that it may receive forwarded jobs from many other time slots. Even
if each of the latter time slots has a low probability of forwarding a job, the total expected number of jobs forwarded to $t$ may be high, and may therefore lead to a high probability of overload at $t$. If all of the highest
value jobs happen to have $t$ as the only slot in their window, this could wreak havoc on our social welfare bounds.

 What {\em is} perhaps somewhat surprising though is that maximum in-degree is the {\em only} relevant graph parameter. In particular, the size of the graph does not play a role.
Showing that our theorem holds is easy if the forwarding graph is a line or even a bounded-degree tree; the analysis boils down to proving inductively that the number of
jobs forwarded from one time slot to the next satisfies an exponential tail bound. However, once the graph has cycles, inductive arguments no longer apply. A key part of our proof consists of showing that, among bounded
degree graphs, a bounded degree tree will maximize the probability of overload at any time slot. This requires the use of a ``decorrelation lemma" that allows us to upper bound the probability of bad dependent events by the probability of bad independent events.


Unfortunately, though, the story doesn't end here, because the forwarding graph induced by our pricing does not in general yield
 a bounded degree forwarding graph.\footnote{Figure~\ref{fig:HighIndegree} shows a concrete example where the in-degree of a time slot can be unbounded.} 
Nonetheless, we show that the paths created by the forwarding of jobs across time slots possess a simple canonical form that allows us to modify them and obtain a new forwarding graph of in-degree at most 3, without reducing the load at any resource. 

\paragraph{Beyond unit-length jobs.} We extend the above result to the setting where
each job $j$ requires the use of the resource for
some number of consecutive time units within its window. This is a significantly more complicated problem and, correspondingly, requires a stronger the large market assumption.\footnote{We obtain bounds that match the unit-length case, except for an additional polynomial dependency of the supply requirement on the maximum length of a job. }

\paragraph{Mechanism design for temporal resources.}  As an application of our main theorem, we develop a new online mechanism for selling cloud services when jobs are strategic, that achieves a number of desirable properties in addition to being near-optimal for social welfare. The problem of designing truthful mechanisms for scheduling jobs with deadlines has been studied with many variations in the worst case setting: the parameters the job can lie about (arrival, departure, value, etc.), deterministic vs. randomized, whether payments are determined immediately (prompt) or not (tardy), unit length vs. arbitrary (bounded) length jobs, and assuming certain \emph{slackness} in the deadlines \cite{azar2011prompt,azar2015truthful,hajiaghayi2005online,cole2008prompt,feldman2015combinatorial}. In the worst case setting, the underlying algorithmic problem (that is, without incentive constraints) already exhibits polylogarithmic hardness \cite{canetti1998bounding}.\footnote{The algorithmic problem of stochastic online matching and its generalizations, under large budgets/capacities, are similar in spirit to the stochastic process we consider \cite{GoelMehta,DH09,Devanur2011,agrawal2014dynamic,Feldman10,KTRV14,agrawal2015fast}.  The temporal aspects of the two problems are very different, however, due to which standard models in that literature such as the random order model are not a good fit here.} Lavi and Nisan \cite{lavi2015online} showed that no deterministic truthful mechanism (w.r.t. all the parameters) can approximate social welfare better than a factor $T$ in the worst case, where $T$ is the time horizon, even for unit length jobs on a single machine. Subsequent papers cope with this impossibility by weakening different assumptions. 
In contrast, we consider the Bayesian setting, where jobs are drawn from a known distribution. We give a simple {\em order oblivious posted pricing} mechanism (OPM), where the seller announces prices, and jobs purchase resources in a greedy first-come-first-served fashion. Our mechanism is truthful for jobs' values, requirements, and deadlines; is prompt in that jobs' allocations and payments are determined right at the time of their arrival; and in the stochastic setting, under the large market assumption described above, achieves near-optimal efficiency (a $1-o(1)$ approximation). Determining the pricing requires the seller to know the demand distribution. When the demand distributions are cyclic, say with a period of a week or a month or a year, the optimal prices are also cyclic with the same periodicity. The seller can then use a polynomial size linear program to solve for the appropriate prices. If the demand distribution stays constant over time, then a constant price per unit of resource per unit of time suffices to provide near-optimal system efficiency.

OPMs have previously been shown to achieve constant-factor approximations to 
revenue and social welfare in many different settings. See, e.g., \cite{CHMS-10, 
feldman2015combinatorial}, and references therein. Feldman et al.~\cite{feldman2015combinatorial} show, in particular, that for 
settings with many items and many clients with fractionally subadditive values, 
there always exists an item pricing such that if clients purchase their favorite 
bundles of items sequentially in arbitrary order, the expected social welfare 
achieved is at least half of the optimum. For our setting with temporal 
resources, this implies that when all jobs have unit length, there exists a 
time-of-use pricing that obtains a half approximation to the optimal social 
welfare. In contrast, we obtain a $(1-\epsilon)$ approximation via the same 
kind of selling mechanism under a  large market assumption. Furthermore, while 
Feldman et al.'s approach can only guarantee an $O(\ell_{\max})$ approximation 
when jobs have lengths in $\{1, \cdots, \ell_{\max}\}$,
we are able to use item pricing\footnote{Indeed, Feldman et al. show that in the small market setting, bundle pricing is necessary to achieve an $o(\lmax)$ approximation.} to again obtain a 
$(1-\epsilon)$ approximation under a large market assumption. To our knowledge, 
this is the first near-optimal result $(1-o(1))$ achieved via OPMs that has no dependence on the length of time the system is running (or, in the setting considered by \cite{feldman2015combinatorial},
the number of items being sold).

\paragraph{Other applications.} While the main motivation for our work is to analyze pricing schemes for 
temporal resources, our analysis applies to other resource allocation settings 
where clients have varied preferences over different resources and greedily 
grab the first available resource on their preference list at their time of 
arrival. Consider, for example, a network of charging stations for electric 
vehicles. 
A client wishing to charge his EV strategically chooses which station to obtain service at, depending on the price, travel time, etc.; if that station is already at capacity, the client goes to his next favorite station, and so on. Depending on the geography of the area and traffic patterns, if the forwarding graph over charging stations formed by such a movement of clients has constant in-degree at every node, then our results apply, and near optimal efficiency can be achieved. While pricing had previously been studied in the context of EV charging (see, e.g., \cite{EVcharging-aamas16}), these works focus on optimizing the average case behavior of the system, rather than studying its stochastic behavior.

\paragraph{Connections to queueing theory.}
Special cases of our models are closely related to standard models in queuing theory, where the demand and supply are stationary (i.e., not changing with time). 
In particular, for unit length jobs, suppose that $B_t=B$ for all $t$, the advertised prices are all equal, and every client tries to obtain service at the first slot in its window, failing which it moves its demand to the next time slot, and then the next, and so on.
This case corresponds to the standard  M/D/B queueing model, with Markovian arrivals, deterministic processing time, and $B$ servers, under the first-come first-served (FCFS) queuing discipline.\footnote{The notation for different queuing models is as follows: an A/B/C queue is one where the inter job arrival times are drawn from distributions in family A, the job lengths distributions belong to family B, and there are C identical machines. D is the class of deterministic distributions, M is the class of exponential distributions, GI is the class of general, independent distributions, GIB is the same class with a bounded support, and an $H_2^*$ distribution is a mixture of an exponential and a point mass.} 
Our result matches the {\em optimal bound}  for this model  in the so-called {\em Halfin-Whitt regime} \citep{halfin1981heavy}: if the expected demand in every time period is $ B - {\Omega}( \sqrt{B \log(1/\delta)} )$, then the overload probability is at most $\delta$.  In other words, every arriving job obtains service at the first time slot in its window with probability $1-\delta$.

In the more general model with different job lengths,  the special case of stationary demand and supply  corresponds to the more general M/GIB/B queueing model, where jobs have arbitrary but bounded processing times. 
Even though  optimal bounds in the Halfin-Whitt regime for FCFS have been known for M/M/B and M/D/B queues \citep{erlang1948rational}, GI/M/B queues \citep{halfin1981heavy}, GI/D/B queues \citep{jelenkovic2004heavy}, and GI/$H_2^*$/B queues \citep{whitt2005heavy}, proving such bounds for M/GIB/B queues with FCFS queuing is a major open problem in queuing theory.	
We consider a variant of FCFS: we admit only a certain limited number of jobs of any particular length at every time slot.
Our result in this setting matches the optimal bound in the Halfin-Whitt regime for this variant, 
albeit with a polynomial dependency on the maximum length. 
Our techniques might give a way to prove the same bound for FCFS, which would resolve the open problem regarding M/GIB/B queues mentioned above. 

\paragraph{Organization of the paper.} Our analysis is divided into four main parts. In Appendix~\ref{sec:LP} we describe properties of a time-of-use pricing that balances supply and demand in every time period in expectation; these are summarized in Lemma~\ref{lem:fractional-assignment} in Section~\ref{sec:prelims}. In Section~\ref{sec:network} we analyze the stochastic resource allocation process for a general forwarding graph over resources, and show that the overload probability is related to the in-degree of the forwarding graph. In Section~\ref{sec:temporal} we analyze the temporal resource setting for unit-length jobs, and give a reduction from this setting to a low-degree-forwarding-graph setting. In Section~\ref{sec:non-unit} we extend this analysis to jobs of arbitrary length. A detailed discussion of related work appears in Appendix~\ref{sec:related-full}.

\section{Preliminaries and Main Results}
\label{sec:prelims}

\paragraph{Temporal\comment{The cloud} resource allocation problem.} We consider a setting where a seller has multiple copies of a reusable resource available to allocate over time. 
Clients, a.k.a. {\em jobs}, reserve a unit of the resource for some length of 
time, after which that unit again becomes available to be allocated to other 
jobs. 
A job $j$ is described by a tuple consisting of a starting time, a deadline, a length, and a value, denoted by $(s_j,d_j,l_j, v_j)$, with the first three elements in $\integers_+$ and the last in $\reals_+$.  
The interpretation is that the job can be processed in the time interval $[s_j,d_j]$, and requires $l_j$ consecutive units of time to complete. 
The value accrued by processing this job is $v_j$. 
Let $W_j=[s_j, d_j-l_j+1]$ denote the ``job's window'' or the interval of time during which the job can be started so as to finish before its deadline. 
For each $t \in \integers_+$, at most $B_t \in \integers_+$ jobs can be processed in parallel.

We consider the following stochastic model of job arrival: there is a set of \emph{potential} jobs $J$; associated with each job $j\in J$ is  a probability $q_j$. 
A potential job $j$ is realized with probability $q_j$ via an independent Bernoulli trial. The order of arrivals of the realized jobs in the system is determined by an adversary who knows the set of realized jobs.
\footnote{ For example, a job $j$ that shows up well before $s_j$ may make a reservation for resources in its window in advance.}


A scheduling mechanism, at the time of each job's arrival, determines whether or not to accept a job. In the former case it allocates $l_j$ consecutive units of time in the time interval $[s_j,d_j]$, and charges the job a payment $p_j$. 
Job $j$ derives a \emph{utility} of $v_j - p_j$ if it is accepted, and 0 otherwise.\footnote{We assume that the utility of job $j$ is $-p_j$ if it does not get at least $l_j$ units of time within the interval $[s_j,d_j]$.}  The objective of the algorithm is to maximize the total value of the jobs processed, a.k.a. the \emph{social welfare}.  The mechanism is required to be \emph{truthful} in dominant strategies, which means that a job $j$ can not get a higher utility by misreporting any of its parameters.\footnote{We assume that the setting disallows Sybil attacks. That is, a job of length $l$ cannot pretend to be multiple different ``subjobs'' of total length $l$ trying to obtain service in consecutive time blocks.}  The algorithm knows the set of potential jobs $J$ (each defined by its associated 4-tuple as above), their arrival probabilities, and the capacities $B_t$ ahead of time, but the realized job arrivals are learned as they happen. The scheduling decision and payment are determined at the time of the jobs' arrival and are irrevocable.

\paragraph{Time of Use pricing.} 
We consider a particularly simple kind of mechanism that announces a ``time of use pricing''  $(p_t)_{t\in\integers_+}$ up front, where
$p_t$ is the price per unit of resource at time $t$. The mechanism then requires a job of length $l_j$ starting at time $t$ to pay a total price of $p_t(l_j)=\sum_{t'=t}^{t+l_j-1} p_{t'}$. 
For every job $j$, let $\favj = \arg \min_{t \in  W_j: p_t(l_j)\le v_j} \{ p_t(l_j)\}$ denote the job's least expensive options within its window, a.k.a. its ``favorite'' starting slots. A mechanism that assigns every arriving job to one of its favorite slots is trivially truthful.
It follows from strong LP duality by a standard argument that with an appropriate choice of prices, such a mechanism obtains nearly the optimal social welfare, \emph{if it is only required to satisfy the supply constraints in expectation}. See Appendix~\ref{sec:LP} for the LP and a proof.
 Let $\opt$ denote the expected maximum social welfare achievable by any feasible (capacity respecting) assignment under this stochastic arrival model.

\begin{lemma}
\label{lem:fractional-assignment}
{\bf (Fractional assignment lemma)}
Fix any set of potential jobs $J$, their arrival probabilities, and the capacities $B_t$ for all $t\in \integers_+$. Then for any $\epsilon > 0$,  $\exists$ nonnegative prices 
$(p_t)_{t\in\integers_+}$ and a fractional assignment $X_{j,t}\in [0,1]$ from 
jobs $j\in J$ to their favorite slots $t\in \favj$,
such that,
\begin{enumerate}
  \item Every job that can afford to pay the price at its favorite slot is 
  fully scheduled: for every $j$ with $p_t(l_j)<v_j$ for $t\in\favj$, we have 
  $\sum_{t\in \favj} X_{j,t}=1$.
	\item The expected allocation at time $t$ is at most $(1-\epsilon)B_t$: 
	$\forall t, \sum_{j\in J, t' \in [t-l_j+1,t]  } q_j X_{j,t'} \leq 
	(1-\epsilon)B_t$. 
	\item The expected social welfare is at least $(1-\epsilon)$ times the 
	optimum: $\sum_{j \in J, t\in \favj } v_j q_j X_{j,t} \geq 
	(1-\epsilon)\opt$. 
\end{enumerate} 
Further, if the distribution is periodic,\footnote{See formal definition in Appendix \ref{sec:LP}.} the prices are also periodic with the same period, and can be computed efficiently.
\end{lemma}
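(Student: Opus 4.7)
My approach is the standard LP duality argument suggested by the preamble. I would formulate the ``expected LP'' whose primal maximizes expected welfare subject to capacity $(1-\epsilon)B_t$ and whose dual variables on the capacity constraints are precisely the prices $p_t$:
\begin{align*}
\text{(P)}\quad \max \quad & \sum_{j\in J}\sum_{t\in W_j} v_j q_j X_{j,t}\\
\text{s.t.}\quad & \sum_{t\in W_j} X_{j,t}\le 1 \quad \forall j\qquad (\alpha_j)\\
& \sum_{j,\,t'\in [t-l_j+1,t]\cap W_j} q_j X_{j,t'}\le (1-\epsilon)B_t \quad \forall t\qquad (p_t)\\
& X_{j,t}\ge 0.
\end{align*}
The dual is to minimize $\sum_j \alpha_j + (1-\epsilon)\sum_t B_t p_t$ subject to $\alpha_j/q_j + \sum_{t'=t}^{t+l_j-1} p_{t'}\ge v_j$ for every $j$ and every $t\in W_j$, and $\alpha_j,p_t\ge 0$. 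I will take $(X_{j,t})$ and $(p_t)$ to be optimal primal and dual solutions respectively; these exist and have equal value by strong LP duality.

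First, I would verify property~3 by bounding the primal value. Let $X^\star_{j,t}$ denote the probability (over job realizations and over internal randomness of the offline optimum) that job $j$ is realized and scheduled to start at time $t$, divided by $q_j$. Then $\sum_t X^\star_{j,t}\le 1$, and $\sum_{j,t'\in [t-l_j+1,t]} q_j X^\star_{j,t'}$ equals the expected number of jobs occupying the resource at time $t$, which is at most $B_t$. Scaling by $(1-\epsilon)$ yields a feasible solution to (P) with objective $(1-\epsilon)\opt$, so the optimal primal value $\sum_{j,t} v_j q_j X_{j,t}\ge (1-\epsilon)\opt$. Property~2 is immediate from the capacity constraint in (P).

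Next, I would use complementary slackness to establish property~1 and to show that the support of $X$ lies in $\favj$. From the dual feasibility $\alpha_j/q_j + p_t(l_j)\ge v_j$ and the fact that $\alpha_j$ only appears in one dual constraint per $t\in W_j$, the optimal choice is $\alpha_j=q_j\cdot\max\{0,\,v_j-\min_{t\in W_j} p_t(l_j)\}$. If $X_{j,t}>0$, the dual constraint for $(j,t)$ is tight, forcing $p_t(l_j)=\min_{t'\in W_j}p_{t'}(l_j)\le v_j$, so $t\in\favj$. If moreover $p_t(l_j)<v_j$ on $\favj$, then $\alpha_j>0$, so by complementary slackness the primal job constraint is tight, giving $\sum_{t\in\favj}X_{j,t}=\sum_{t\in W_j}X_{j,t}=1$, which is property~1.

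Finally, for periodicity: if the distribution of potential jobs and the capacities are invariant under time shift by $T$, then (P) itself is invariant, so by averaging an optimal solution over one period I can produce an optimal primal and dual that are periodic with period $T$. The resulting LP then collapses to a polynomial-size LP over one period's variables and constraints, which can be solved efficiently. The only mildly delicate step is the reduction from the infinite-horizon LP to a finite periodic LP, which I would handle by a standard shift-and-average argument; beyond this, every step is routine given strong duality.
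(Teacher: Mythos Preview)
Your proposal is correct and follows essentially the same route as the paper: write the expected LP with capacities $(1-\epsilon)B_t$, take the dual capacity variables as prices, and read off properties~1 and the support condition from complementary slackness, with periodicity handled by a shift-and-average collapse to a finite LP. The only difference is cosmetic—you spell out the relaxation argument for property~3 (scaling the offline optimum's marginals by $1-\epsilon$), whereas the paper leaves this implicit; and the paper works out the periodic ``wrap-around'' LP in more detail, which is exactly the mildly delicate step you flag.
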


\paragraph{The asynchronous allocation process.}
Of course, the actual allocation of slots to jobs happens in an online fashion and the capacity constraints are hard constraints that must
be met regardless of
which jobs are actually realized.
The mechanism we analyze is a greedy first-come first-served\footnote{Sometimes for jobs of length $>1$ our mechanism artificially limits the capacity of a slot, that is, does not allocate a block of time slots even when available; however, it does so in a truthful manner. This detail is discussed in Section~\ref{sec:non-unit}.} type mechanism: The slot prices $(p_t)_{t\in\integers_+}$ induce a preference ordering over
slots for each job $j$;
this is a list of slots $t$ in $j$'s window $W_j$ with $p_t(l_j)<v_j$, in non-decreasing order of price. 
Let $\Pathj$ denote the preference ordering\footnote{When prices of time slots are not unique, this preference ordering is not unique. We need to impose a particular tie breaking rule among the job's favorite slots, but can break ties among other slots arbitrarily. Part of this tie-breaking is required already to satisfy the conclusion of \Cref{lem:fractional-assignment}, part of it is required to ensure the overall stability of this system.  We detail the tie-breaking rule in \Cref{sec:temporal}. Note that the mechanism remains truthful regardless of the tie-breaking rule.} of job $j$ over time slots in its window.
When job $j$ arrives, it considers  time slots in the order of $\Pathj$, and gets served at the first one that has resources available (or doesn't get served
if no slot in $\Pathj$ has leftover capacity). 
We emphasize that {\em which} jobs are realized 
is determined by the stochastic model described above, but {\em when} jobs arrive is determined adversarially, and can depend
 on which other jobs are realized.
\footnote{This is a reversal of the random order model popular in online matching, where the set of arrivals is adversarial but the order is random.}  For this reason, we call this an ``asynchronous allocation process''. 
 


Our main theorem shows that with an appropriate choice of $\epsilon>0$, 
the asynchronous allocation process corresponding to the price vector given by Lemma \ref{lem:fractional-assignment} obtains near-optimal social welfare. 



Let $\lmax:= \max_{j \in J} l_j$ and $B := \min_t B_t$. The case when $\lmax = 1$ is called the \emph{unit length jobs} setting.
\begin{thm} 
\label{thm:stability-cloud}
{\bf (Stability of service theorem)}
  $\exists$  a universal constant $c$ such that $\forall~\epsilon\in [0,1/2]$, for prices determined by Lemma~\ref{lem:fractional-assignment} for this $\epsilon$, in the asynchronous allocation process for the temporal\comment{cloud} resource allocation problem, every arriving job that can afford the price at its favorite slot gets accepted at such a slot with probability $\geq 1-\epsilon$, and the social welfare achieved is $\geq (1-2\epsilon)$ times $\opt$, if  for the unit length jobs case and the general case respectively,
\[  B  \ge c \frac {\log(1/\epsilon)} {\epsilon^2}, \quad  \text{ and, } \quad
B \ge c\frac{\lmax^6\log(1/\epsilon)}{\epsilon^3} .\]
\end{thm}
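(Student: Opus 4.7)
The plan is to reduce the theorem to a concentration bound on the load of each time slot, and then to obtain that bound by rewriting the complicated dependencies among forwarded jobs into a simple bounded-in-degree structure. First I would invoke Lemma~\ref{lem:fractional-assignment} with the given $\epsilon$ to obtain prices $(p_t)$ and a fractional assignment $(X_{j,t})$ whose expected load on every slot is at most $(1-\epsilon)B_t$ and whose expected value is at least $(1-\epsilon)\opt$. Call slot $t$ \emph{overloaded} if the number of jobs served there in the asynchronous process equals $B_t$. Because the mechanism is greedy and each realized job first tries its favorite slots in $\favj$, a job that can afford $p_t(l_j)$ at some $t\in\favj$ fails to be accepted at a favorite slot only when every slot in $\favj$ is already overloaded when it arrives. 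It therefore suffices to show that for every $t$, the overload probability at $t$ is at most $\epsilon$; the social welfare bound then follows since the value lost to overload is at most $\epsilon\opt$, on top of the $\epsilon\opt$ slack already present in the fractional assignment.

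The central object is the forwarding graph $G$ on time slots, with an edge $t\to t'$ whenever $t$ is immediately followed by $t'$ in some preference list $\Pathj$. The load at slot $t$ then decomposes as a sum of indicators counting jobs that enter $t$ either as a favorite slot or after being forwarded along a directed path in $G$. The core claim I would establish is that, in a forwarding graph of in-degree at most $\Delta$,
\begin{equation*}
\Pr[\text{overload at } t] \le \exp\!\bigl(-\Omega(\epsilon^2 B_t/\Delta)\bigr).
\end{equation*}
If $G$ were a line, this would follow by induction: the overflow from one slot to the next satisfies a Chernoff-type recurrence, and a tail bound propagates forward. Bounded-degree trees admit the same induction because the disjoint subtrees feeding into $t$ contribute independent load contributions. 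The two obstructions are that $G$ can contain cycles (destroying independence) and that its in-degree is a priori unbounded.

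For the cycle issue I would invoke the decorrelation lemma sketched in the introduction: the overload event is monotone in the underlying Bernoulli realizations, so replacing the (positively correlated) forwarded-load distribution by its independent counterpart on a tree of the same in-degree can only increase the overload probability. For the unbounded in-degree issue, the key structural fact is that every preference list $\Pathj$ is totally ordered by price, so any forwarding path has prices rising monotonically. I would exploit this canonical form to rewire each forwarded job through a constant number of auxiliary ``hub'' slots, producing a modified forwarding graph of in-degree at most three whose slot-loads stochastically dominate those of the true process. Combining this rewiring with the tree bound above yields an overload probability of at most $\exp(-\Omega(\epsilon^2 B))$, which is below $\epsilon$ once $B\ge c\epsilon^{-2}\log(1/\epsilon)$.

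For jobs of general length I would track, at each slot $t$ and for each length $\ell\in\{1,\dots,\lmax\}$, the number of accepted length-$\ell$ jobs whose service interval contains $t$. The mechanism caps each of these counts separately (the artificial capacity restriction noted in Section~\ref{sec:non-unit}), which decouples the analysis across length classes so that the unit-length machinery applies within each class. The $\lmax^6$ factor would then emerge from a union bound over length classes, from the extra slack per class required when applying Lemma~\ref{lem:fractional-assignment} to partitioned capacities, and from accounting for the fact that a single length-$\ell$ job simultaneously affects $\ell$ consecutive slots. I expect the principal technical difficulty to lie in this extension: verifying that the canonical form of forwarding paths persists in the multi-length setting, so that one can still reduce to an in-degree-three forwarding graph and reuse the decorrelation and tree-maximization arguments that drive the unit-length bound.
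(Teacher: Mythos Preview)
Your overall architecture matches the paper: invoke Lemma~\ref{lem:fractional-assignment}, bound the overload probability at each slot, reduce to a bounded-in-degree forwarding graph, analyze trees by an MGF induction, and handle cycles via a decorrelation argument. But two of the steps you describe are real gaps rather than details.

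\textbf{Cycles.} Your treatment of cycles is not the paper's argument and, as stated, does not work. You write that ``the overload event is monotone in the underlying Bernoulli realizations, so replacing the (positively correlated) forwarded-load distribution by its independent counterpart on a tree \ldots\ can only increase the overload probability.'' Monotonicity alone does not justify this replacement; positive association of the forwards would push the inequality the wrong way. The paper needs a two-stage argument. First, a \emph{deterministic} lemma (Lemma~\ref{lem:step1-Tree}): for any fixed realization of arrivals, if $u$ is overloaded in $G$ then there exists a subtree $T\in\Trees{u}$ on which routing the same arrivals already overloads $u$. This is proved by cycle removal plus a path-rerouting step, and it is what lets you pass from the cyclic process to $\max_{T}\ell_u^T(\bA)$. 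Second, the paper builds a single ``tree of trees'' $\binaryTree_u$ containing every $T\in\Trees{u}$ as a subtree; only then does the decorrelation lemma (for the $\max$ function, Lemma~\ref{lem:probmore}) apply, replacing the repeated copies of each $A_i$ across $\binaryTree_u$ by independent draws. Without the deterministic tree-reduction you have no tree to decorrelate on, and without the tree-of-trees you cannot compare across all $T$ simultaneously.

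\textbf{Arbitrary lengths.} Your account of the general case is off in both the construction and the arithmetic. The paper does not cap, at each slot $t$, the number of length-$\ell$ jobs whose interval contains $t$; it assigns a separate capacity $\Btl$ to each \emph{time block} $(t,l)$ indexed by starting slot and length, with $\Btl=\sum_{j:l_j=l}q_jX_{j,t}+\epsilon' B_t$ and $\epsilon'=\epsilon/\lmax^2$. The forwarding graph $\DD$ is then layered over blocks (in-degree $4$, not $3$), with the unit-length graph $D_l$ inside each layer plus inter-layer edges $(t,l)\to(t,l{+}1)$. The $\lmax^6/\epsilon^3$ does not come from a union bound over length classes; it comes from applying the network theorem with parameter $\epsilon'$: one needs $\epsilon'^{3}B_t\gtrsim\log(1/\epsilon)$, i.e.\ $(\epsilon/\lmax^2)^3 B\gtrsim\log(1/\epsilon)$, since the reserve in each block is only $\epsilon' B_t$ (so $\Btl\ge\epsilon' B_t$ enters the MGF bound as a cube). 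Your ``union bound plus $\ell$ consecutive slots'' heuristic would not reproduce this exponent.
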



As a step towards proving this theorem, we will study a slightly more abstract setting without prices in the next section: Suppose that the time slots are nodes in a ``forwarding graph'' $G$, and that there
is an edge from time slot $t'$ to time slot $t$ if there is some job $j$
such that $t$ follows $t'$ in $j$'s preference ordering $\Pathj$.
Jobs arrive at the various nodes in the graph\footnote{ 
The prices from Lemma \ref{lem:fractional-assignment} will guarantee that the number
of external arrivals to each node is slightly less than the capacity
of that node.} and
 move through this graph until they are successfully served.
However,  we relax the requirement that each
job $j$ must follow $\Pathj$; rather, we allow
each job to take an adversarially selected path in the forwarding graph $G$ in its quest for service.
We then present conditions on the arrival process, in terms
of the maximum indegree of the graph, under which failures are 
unlikely to cascade.

\section{Stability of service for a network of servers}
\label{sec:network}

We will analyze the temporal process described above by
reducing it to the following {\em network of servers} setting: 
\begin{itemize}
\item There is  a set of $n$ {\em servers}, which we identify with $[n]$. Server $i$ can service a total of $B_i$ jobs and then expires.

\item There is a directed forwarding graph $G$ whose vertex set is the $n$
servers $[n]$.  Let $\dmax$
denote the maximum indegree in $G$. We will refer to
the vertices of $G$ as either servers or nodes. 

\item  The number of jobs entering the network at each node is 
determined by a stochastic
process\footnote{ E.g., Potential job $j$ 
arrives with probability $q_j$. } : Denote
by $A_i$ the number of jobs that enter the network at
node $i\in [n]$. The random variables $A_i$ are mutually independent.

\item  Each arriving job $j$ is forwarded through the network $G$
until the job reaches an {\em available} server. Server $i$ is available if it has not yet served $B_i$ jobs.
Thus,  if job $j$ enters the network at $i$, and server $i$ is available,
$i$ serves $j$ and $j$ leaves the network. If $i$ has already served $B_i$ jobs prior to $j$'s arrival,
then job $j$ gets {\em ``forwarded''} to some neighbor of $i$ in $G$ and tries
to get service there, and so on. Job $j$ leaves the network as soon
as it is serviced, or it has tried all reachable servers, or it gives up, whichever happens first.

\item All aspects of this process other than the external arrival process are
assumed to be {\em adversarial}: the paths jobs take as they seek an available server, the timing of external arrivals, and the timing of forwarding events. 
\end{itemize}

Our main theorem for this setting gives conditions under which
the probability that any particular job gets served by the first server it tries is close to $1$. The crucial point here is that this is independent of $n$.

\begin{thm}
\label{thm:Network-main}
Consider the network of servers setting as above. 
Fix an $\epsilon$ in $[0,1/2]$. Suppose that for each node $i\in[n]$
the moment generating function of $A_i-B_i$  satisfies 
$\EE { e^ {\epsilon (A_i- B_i)}} \le \epsilon ^2/e\dmax.$
Then for any job $j$, $\Pr[{j \text{ is not served at the first
    node on its path}}] \le \epsilon$.
In other words, failures don't cascade and each job is served
with high probability at the node at which it enters.
\end{thm}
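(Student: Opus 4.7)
The plan is to use a moment-generating-function (Chernoff) argument combined with a decorrelation/tree-unrolling step to handle cycles in the forwarding graph. For each node $i$, let $Z_i$ denote the total number of jobs that ever attempt service at $i$ during the process, and define the overflow $Y_i := \max(0, Z_i - B_i)$. If job $j$ enters at $i$, it fails to be served at its first node only when at least $B_i$ other jobs also attempt service at $i$, which forces $Y_i \geq 1$; so it suffices to show $\Pr[Y_i \geq 1] \leq \epsilon$ for every $i$. Since each overflow at an in-neighbor $k$ of $i$ contributes at most one forwarded job to $i$, we have the deterministic bound $Z_i \leq A_i + \sum_{k \to i} Y_k$, and hence
\begin{equation*}
Y_i \leq \max\Bigl(0, (A_i - B_i) + \sum_{k \to i} Y_k\Bigr).
\end{equation*}
Applying the elementary inequality $e^{\epsilon \max(0,x)} \leq 1 + e^{\epsilon x}$ yields the key MGF recursion
\begin{equation*}
\mathbb{E}[e^{\epsilon Y_i}] \leq 1 + \mathbb{E}\Bigl[e^{\epsilon(A_i - B_i)} \prod_{k \to i} e^{\epsilon Y_k}\Bigr].
\end{equation*}

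The main obstacle is that the variables $\{Y_k\}$ are generally correlated through shared upstream randomness, and the forwarding graph may contain directed cycles, so a naive induction fails. To handle both issues simultaneously, I would establish a decorrelation/coupling result: construct an auxiliary ``unrolled'' process on a $\dmax$-ary tree in which each node receives its incoming overflow from an independent copy of the upstream subtree, with the same marginal arrival distributions. The coupling is designed so that the overflow at the root of the tree stochastically dominates the overflow at the corresponding node in the original process. The driving intuition is that removing positive correlations among upstream overflows can only increase the joint probability of the ``conjunction of bad events'' that cascades to $i$. After this reduction, the MGF recursion factors:
\begin{equation*}
\mathbb{E}[e^{\epsilon Y_i}] \leq 1 + \mathbb{E}[e^{\epsilon(A_i - B_i)}] \cdot \prod_{k \to i} \mathbb{E}[e^{\epsilon Y_k}],
\end{equation*}
with at most $\dmax$ independent factors on the right, and cycles no longer pose any difficulty since we can now induct on depth.

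With factorization in hand, I would induct on tree depth with hypothesis $\mathbb{E}[e^{\epsilon Y_k}] \leq 1 + \epsilon^2/\dmax$ at every node $k$. The base case is a source node, where $Y_k = \max(0, A_k - B_k)$ and the bound follows directly from the given MGF assumption together with $e^{\max(0,x)} \leq 1 + e^x$. For the inductive step, using $(1+x)^{\dmax} \leq e^{x\dmax}$,
\begin{equation*}
\prod_{k \to i} \mathbb{E}[e^{\epsilon Y_k}] \leq \bigl(1 + \epsilon^2/\dmax\bigr)^{\dmax} \leq e^{\epsilon^2},
\end{equation*}
so combining with $\mathbb{E}[e^{\epsilon(A_i - B_i)}] \leq \epsilon^2/(e\dmax)$ gives $\mathbb{E}[e^{\epsilon Y_i}] \leq 1 + \tfrac{\epsilon^2 e^{\epsilon^2}}{e\dmax} \leq 1 + \epsilon^2/\dmax$ for $\epsilon \leq 1/2$, closing the induction. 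To finish, since $Y_i$ takes nonnegative integer values,
\begin{equation*}
\mathbb{E}[e^{\epsilon Y_i}] \geq \Pr[Y_i = 0] + e^{\epsilon}\Pr[Y_i \geq 1] = 1 + (e^{\epsilon} - 1)\Pr[Y_i \geq 1],
\end{equation*}
which yields $\Pr[Y_i \geq 1] \leq (\epsilon^2/\dmax)/(e^{\epsilon} - 1) \leq \epsilon/\dmax \leq \epsilon$, as required. The most delicate part of the plan is the decorrelation coupling in the second paragraph: it must simultaneously break dependencies introduced by shared external arrivals and eliminate the cycles in the forwarding graph without ever underestimating the load at any node. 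Once that structural reduction is in place, the Chernoff-style recursion is a routine fixed-point calculation.
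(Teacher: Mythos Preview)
Your MGF induction on the tree and the final Markov-type extraction are correct and essentially identical to the paper's Lemma~3.2. The gap is entirely in the decorrelation step, and the intuition you offer for it points in the wrong direction.

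You want to pass from $\mathbb{E}\bigl[e^{\epsilon(A_i-B_i)}\prod_{k\to i} e^{\epsilon Y_k}\bigr]$ to the factored form $\mathbb{E}[e^{\epsilon(A_i-B_i)}]\cdot\prod_{k\to i}\mathbb{E}[e^{\epsilon Y_k}]$. But the $Y_k$'s are positively associated (they are monotone functions of shared upstream arrivals), and for positively associated nonnegative variables one has $\mathbb{E}\bigl[\prod_k e^{\epsilon Y_k}\bigr]\ge \prod_k\mathbb{E}[e^{\epsilon Y_k}]$---the inequality goes the \emph{wrong way}. Equivalently, your stated heuristic (``removing positive correlations can only increase the joint probability of the conjunction of bad events'') is backwards: positive correlation makes conjunctions \emph{more} likely, so decorrelating lowers the product moment. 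Concretely, if two in-neighbors share all their upstream randomness so that $Y_{k_1}=Y_{k_2}=Y$, then the relevant sum is $2Y$, which has strictly heavier upper tails than $Y'+Y''$ for independent copies $Y',Y''$; an unrolled tree with independent subtrees therefore does not obviously dominate the load at $i$ through this sum-based recursion.

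The paper circumvents this by inserting a purely combinatorial step you are missing. For every \emph{fixed} realization of arrivals, it shows (via cycle removal and a path-rerouting argument) that if $u$ is overloaded under the adversarial paths in $G$, then there exists a directed subtree $T\subseteq G$ rooted at $u$ such that routing all jobs along $T$ still overloads $u$. Hence $\Pr[\ell_u(\realPaths)\ge B_u]\le\Pr\bigl[\max_{T\in\Trees{u}}\ell_u^T(\bA)\ge B_u\bigr]$. Now the decorrelation is applied to a \emph{maximum} over trees, not to a sum of overflows, and for the max the direction is correct: if $g_\ell$ are monotone and $Y_\ell$ are i.i.d.\ copies of $X$, then $\Pr[\max_\ell g_\ell(Y_\ell)\le a]=\prod_\ell\Pr[g_\ell(X)\le a]\le\min_\ell\Pr[g_\ell(X)\le a]=\Pr[\max_\ell g_\ell(X)\le a]$, so the independent version stochastically dominates. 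After this, all trees in $\Trees{u}$ embed into a single finite ``tree of trees'' (indexed by simple paths in $G$ ending at $u$) with mutually independent arrivals, and your MGF induction applies verbatim to that tree. The deterministic rerouting lemma is the missing idea; without it, the unrolling you describe is neither finite (cycles in $G$ unroll to infinite depth) nor shown to dominate.
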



We introduce some more notation before we proceed. 
 For a particular instantiation of the 
 process (as determined
 by the stochastic job arrivals and the adversarial timing of arrivals
 and forwards), let $P_j$ denote an arriving
 job $j$'s {\em realized path} in $G$, the set of servers that $j$ tries to get
 service from. This path begins of course at the node at which
 $j$ enters the network. Let
 $\realPaths=(P_j)$ denote the collection of all realized paths.
Let $\ell_i(\realPaths)$ denote the number of jobs that attempt to get
service at  node $i$ (external job arrivals to $i$ as well as
forwards), a.k.a. its ``load''. We say that node $i$ is ``overloaded'' if
$\ell_i(\realPaths)\ge B_i$.

If a node $i$ forwards
a  job, then 
node $i$ must have already served $B_i$ other jobs. Thus, the collection of
realized paths $\realPaths$ satisfies the following {\bf min-work
	condition}: for every node $i$, the number of jobs forwarded is
no more than the number of realized paths $P_j$ containing $i$ minus the
capacity $B_i$.

We now proceed to sketch a proof of Theorem \ref{thm:Network-main}. A detailed proof can
be found in Appendix~\ref{sec:network-full}.

%
Consider the load on a single node and suppose that it has constant in-degree. If each of the forwards from its predecessors 
were {\em independent}, and these forwards were few and far between, 
as captured by a bound on the expectation of the moment generating function, 
then it can be argued that forwards from this node would also inductively satisfy 
a similar bound on its moment generating function. 
The forwards are not independent, so this simple approach does not work. 
Moreover, $G$ is not necessarily acyclic, so there is not even an obvious order for induction. 
However, these conditions are satisfied when $G$ is a tree, 
and our first lemma formalizes the above approach in this case. 

\begin{lemma}
\label{lemma:TreeProcess}
Fix an $\epsilon$ in $[0,1/2]$. Suppose that the network $G$ is a
finite directed tree, that is, it contains no directed cycles and every node has out-degree $1$, and that the moment generating function of $A_i-B_i$
for each node $i$ satisfies $\EE { e^ {\epsilon (A_i- B_i)}} \le
\epsilon ^2/e\dmax$.
Then, for any $i$,
$\Pr[{\ell_i(\realPaths)\ge B_i}] \le \epsilon$.
\end{lemma}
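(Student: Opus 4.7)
}
The plan is to induct from the leaves of the in-tree toward the nodes at which we want to bound overload, controlling at each step the moment generating function of the number of jobs forwarded along each edge. Let $F_i$ denote the number of jobs that $i$ forwards to its out-neighbor, and write $\ell_i = A_i + \sum_{i' \to i} F_{i'}$ for the load at $i$. The min-work condition gives $F_i \le (\ell_i - B_i)^+$, and the elementary inequality $e^{\epsilon x^+} \le 1 + e^{\epsilon x}$ (valid for every real $x$) yields
\[
\EE{e^{\epsilon F_i}} \le 1 + \EE{e^{\epsilon(\ell_i - B_i)}}.
\]
Because $G$ is a directed tree with out-degree $1$, every job follows a forced path toward the root once it enters, so $F_i$ is a deterministic function of the external arrivals at nodes in the in-subtree rooted at $i$. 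The in-subtrees of distinct in-neighbors of $i$ are disjoint, so $A_i$ and the collection $\{F_{i'}\}_{i' \to i}$ are mutually independent regardless of the adversary's choice of timing.

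I will show by induction from the leaves that $\EE{e^{\epsilon F_i}} \le 1 + \epsilon^2/\dmax$ at every node. At a leaf, $\ell_i = A_i$, so the hypothesis on $A_i$'s MGF already gives $\EE{e^{\epsilon F_i}} \le 1 + \epsilon^2/(e\dmax)$, which is stronger than needed. At an internal node $i$ with at most $\dmax$ in-neighbors, independence factors the MGF of $\ell_i - B_i$ and yields
\[
\EE{e^{\epsilon F_i}} \le 1 + \EE{e^{\epsilon(A_i - B_i)}} \prod_{i' \to i} \EE{e^{\epsilon F_{i'}}} \le 1 + \frac{\epsilon^2}{e\,\dmax}\left(1 + \frac{\epsilon^2}{\dmax}\right)^{\!\dmax}.
\]
Using $(1 + \epsilon^2/\dmax)^{\dmax} \le e^{\epsilon^2} \le e$ (valid since $\epsilon \le 1$) closes the induction. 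The lemma itself then follows by running the same factorization without the leading ``$1+$'' and invoking Markov: $\Pr[\ell_i(\realPaths) \ge B_i] \le \EE{e^{\epsilon(\ell_i - B_i)}} \le (\epsilon^2/(e\dmax)) \cdot e = \epsilon^2/\dmax \le \epsilon$, where the last step uses $\dmax \ge 1$ and $\epsilon \le 1/2$.

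The main obstacle is picking an inductive invariant that is a stable fixed point of the recursion $M_i \mapsto 1 + \alpha \prod_{i' \to i} M_{i'}$ with $\alpha = \epsilon^2/(e\dmax)$. The hypothesis $M_i \le 1 + \epsilon^2/\dmax$ is carefully calibrated against the assumed external-arrival bound $\alpha$: the factor $1/e$ in $\alpha$ is precisely what absorbs the worst-case blow-up $(1 + \epsilon^2/\dmax)^{\dmax} \le e$ coming from having up to $\dmax$ independent upstream branches. A weaker invariant would fail to propagate along long chains, while a stronger one would fail at the leaves. Once this calibration is identified, the rest of the argument reduces to the two inequalities $e^{\epsilon x^+} \le 1 + e^{\epsilon x}$ and $(1 + x/n)^n \le e^x$.
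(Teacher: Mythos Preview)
Your proposal is correct and follows essentially the same inductive approach as the paper: control the moment generating function of the forwarded load via the invariant $\EE{e^{\epsilon F_i}}\le 1+\epsilon^2/\dmax$, exploit the disjointness of in-subtrees to factor the MGF, and close the recursion using $(1+\epsilon^2/\dmax)^{\dmax}\le e^{\epsilon^2}$. The only cosmetic differences are that the paper works with the shifted quantity $F'_v=(\ell_v-B_v+1)^+$ (incurring an extra $e^\epsilon$ factor in the recursion and needing $\epsilon+\epsilon^2\le 1$) and extracts the final probability bound via $1-\eta_v+\eta_v e^\epsilon\le\rho$, whereas you use $F_i=(\ell_i-B_i)^+$ and apply Markov directly to $e^{\epsilon(\ell_i-B_i)}$; your route is marginally cleaner.
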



\afterpage{%
\thispagestyle{empty}
\begin{figure}[H]
\begin{center}
\includegraphics[width=\textwidth]{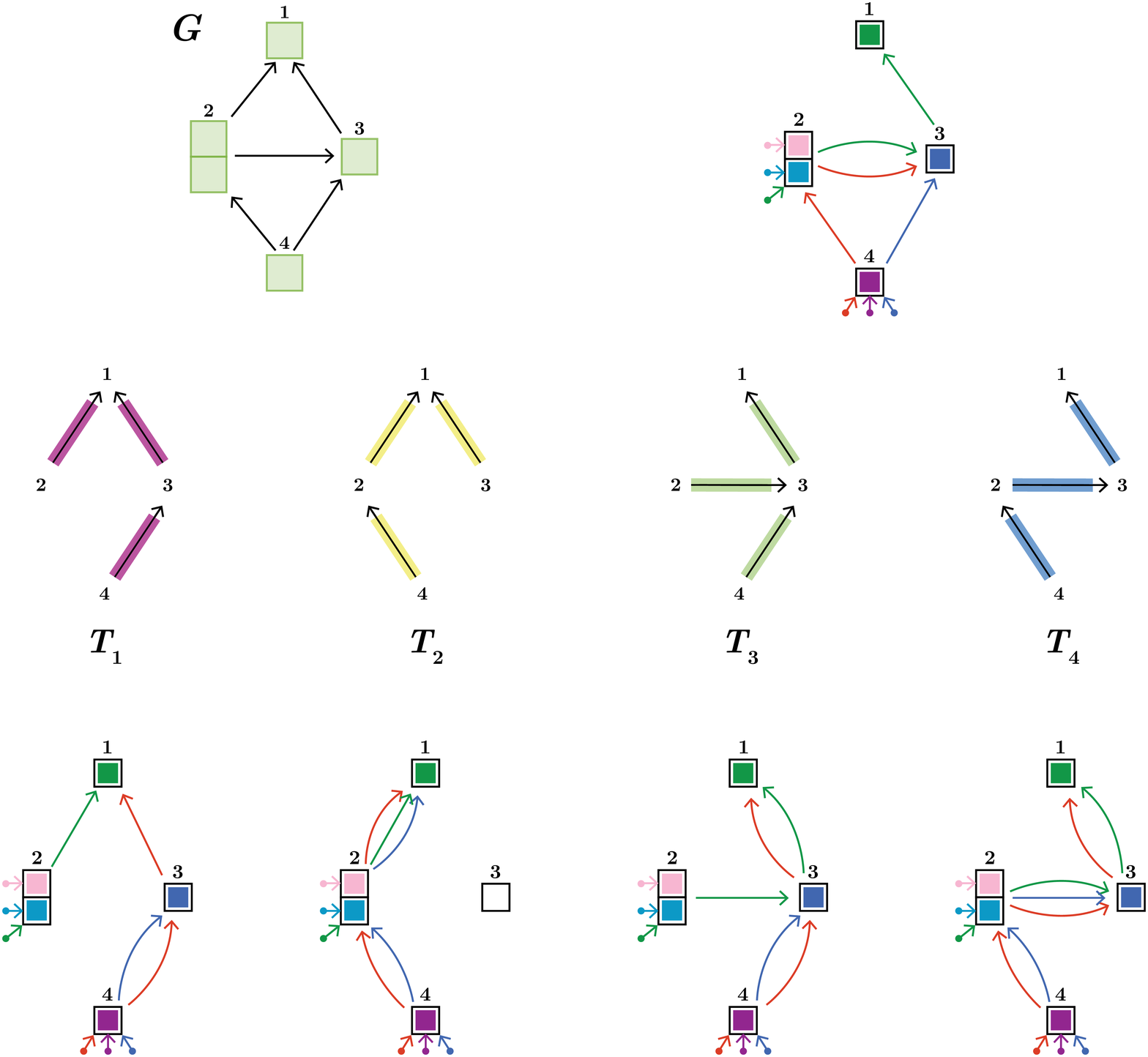}
\end{center}
\caption{
\label{fig:Routing}
This figure illustrates the first step of the argument for the network
of servers. The graph $G$ is shown on the upper left. In this example, all nodes except for node 2 have a capacity of 1; node 2 has a capacity of 2.
The upper right shows an example of what might happen with three jobs arriving at node 2 and three arriving at node 4.  The blue job arriving at node 4 gets forwarded to node 3 where it gets
served. The green job arriving at node 2 gets forwarded to node 3 and then to node 1 where it finally gets service, and so on.  Notice that the load at node 1 for this set of job arrivals and paths is 1.  The middle panel of the figure shows
the four trees in  $\Trees{1}$. The bottom panel shows how the packets might be routed and served, if all forwarding was done along edges of the associated tree (immediately above). In this example, the worst case load at node 1 is when all jobs
are routed along edges of the tree $T_2$.  This results in a load of 3 at node 1.}
\end{figure}

\newpage

\thispagestyle{empty}

\begin{figure}[H]
\begin{center}
\includegraphics[width=0.8\textwidth]{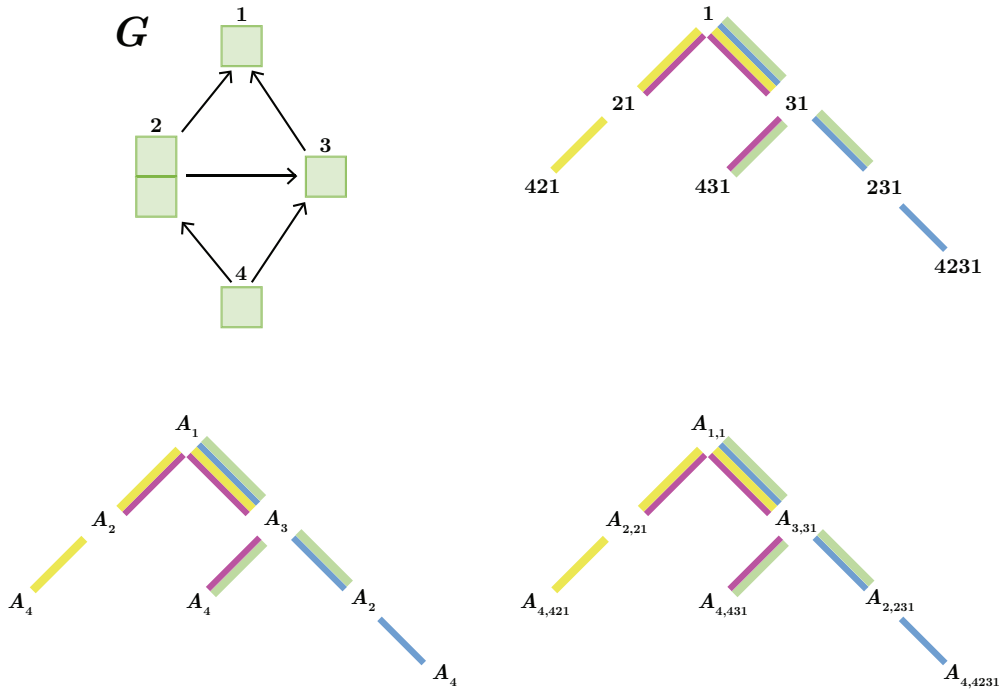}
\end{center}
\caption{
\label{fig:AllMulticolorTrees}
\small{This figure illustrates the second step of argument for network of servers.
Again on the upper left we see the graph $G$. On the upper right
we see the tree of trees $\binaryTree _1$. Each node is labeled with a simple
path in $G$ to node 1. It is easy to see that each of the trees $\Trees{1}=\{T_1, \ldots, T_4\}$ 
from Figure \ref{fig:Routing} is a subtree of   $\binaryTree _1$. (The edges
are color-coded as in the corresponding tree in $\Trees{1}$.)  The bottom left version of $\binaryTree _1$ indicates on each node the random variable which is
the number of external job arrivals at that node. Lemma \ref{lem:step1-Tree} implies that
it suffices to bound the probability of overflow at the root of this tree.
Lemma  \ref{SDApp} shows that instead we can bound the probability
of overflow at the root of the bottom right tree, where the external arrivals at different nodes are independent. For example, $A_{2,21}$ and $A_{2, 231}$ are independent samples from the distribution of $A_2$.} }
\end{figure}

\begin{figure}[H]
\begin{center}
\includegraphics[width=0.23\textwidth]{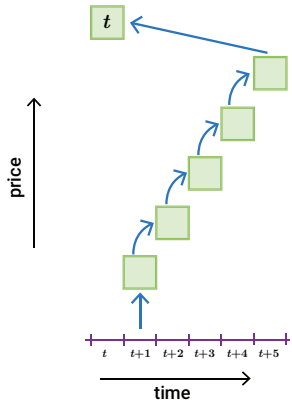}
\end{center}
\caption{
\label{fig:HighIndegree}
\small{This figure illustrates that the temporal process can have high in-degree. 
Suppose that there are jobs with window $[t, t+1]$, $[t, t+2]$,
$[t, t+3]$ and so on. Then all of these jobs would first try slot $t+1$. 
The final slot all of these jobs would try is slot $t$, so there would be
an edge in the forwarding graph from each of $t+1, \ldots, t+5$ to $t$.}}
\end{figure}
}


Our proof of the Theorem \ref{thm:Network-main} will reduce the analysis in a general
network  to that in an appropriately defined tree network. The argument has two parts that we outline next.
Throughout the proof, we will focus on a particular server $u$ in $G$.

\subsection*{Part 1: Reducing to a tree for fixed arrivals}
In the first part, we fix the set of realized paths $\realPaths$ (as determined
by the stochastic job arrivals and the adversarial timing of arrivals
and forwards).
This fixes the entries $\ba = (a_i)_{i=1}^n$, where $a_i$ is
the number of jobs arriving at node $i$ from outside the network.
We then show that if the node $u$ is overloaded for this fixed outcome,  
then there exists a subtree of the network $G$ that is rooted at $u$, such  that 
if jobs are forwarded exclusively along edges of this tree until service is received (or there is no where else to go), then node $u$
is still overloaded. 

\addtocounter{page}{-2}

More formally, let $T$ be a directed tree rooted at the node $u$.  For a vector of external arrivals $\ba =
(a_i)$ and node $i$, let $\ell_i^T(\ba)$ denote the load on node $i$
(external arrivals plus forwards) when jobs are forwarded along the edges of the tree $T$ until service is received.

Let $\Trees{u}$ denote the set of all directed subtrees of
$G$ rooted at node $u$.  The following lemma captures the first part
of our analysis.

\begin{lemma}
\label{lem:step1-Tree}
If a fixed set of arrivals (resulting in a particular $\ba$) and induced paths $\Paths$ overload a
node $u$ in the network $G$, then  $\exists$ a tree $T\in\Trees{u}$
such that $u$ is overloaded with the same set of arrivals $\ba$ when
requests are routed along $T$. Formally, 
$$\textstyle  \ell_u (\Paths) \ge B_u \quad\text{ implies that }\quad \max_{T\in
  \Trees{u}} \ell_u^T (\ba) \ge B_u.$$
\end{lemma}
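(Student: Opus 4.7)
The plan is to distill a tree $T$ from the realized paths that reach $u$, and then verify $\ell_u^T(\ba)\ge B_u$ by a conservation accounting that compares the tree and original routings on the same jobs.

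Let $J_u\eqdef\{j:u\in P_j\}$, so $|J_u|=\ell_u(\Paths)\ge B_u$. I would initialize $T=\{u\}$ and process the jobs of $J_u$ one at a time: for each $j\in J_u$ with realized path $P_j=(v_0^j,\dots,v_{k_j}^j=u)$, let $i^*$ be the smallest index such that $v_{i^*}^j$ is already in $T$, and adjoin the new vertices $v_0^j,\dots,v_{i^*-1}^j$ together with the edges $(v_s^j,v_{s+1}^j)$ for $s<i^*$. Every newly added vertex acquires a single outgoing tree-edge pointing to a vertex added earlier, so $T$ remains a directed subtree of $G$ rooted at $u$ throughout. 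I would then close $T$ under contributors: while there exists an as-yet unprocessed job whose realized path meets $V(T)$, add the corresponding prefix in exactly the same way. The closure guarantees that every in-neighbor $w$ of any $v\in T$ that forwarded a job to $v$ in the original is itself in $T$. A small fan-in example (three sources sharing a common bottleneck into $u$) shows the closure is necessary: without it the tree may omit a source whose forward was essential to saturating an intermediate tree node.

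To prove $\ell_u^T(\ba)\ge B_u$, set $|J_T|\eqdef\sum_{v\in T}a_v$. By the closure step, every job whose original realized path meets $V(T)$ originates in $T$; these $|J_T|$ jobs split in the original routing into $|J_u|$ that reach $u$ and $|J_T|-|J_u|$ that are served at some node of $G$. In the tree routing the same $|J_T|$ jobs arrive, and conservation (external arrivals equal tree-node service plus overflow at the root) gives
\[
\ell_u^T(\ba)=|J_T|-\sum_{v\in T\setminus\{u\}}\min(\ell_v^T,B_v).
\]
So the claim reduces to the inequality $\sum_{v\in T\setminus\{u\}}\min(\ell_v^T,B_v)\le |J_T|-|J_u|$. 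I would prove this by structural induction on $T$ from the leaves up to the root, using that by construction each $\sigma(v)$ lies on an original path that reached $u$: the tree path from any $v\in T$ to $u$ traverses nodes that were all overloaded in the original routing and so remain saturated under the tree routing.

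The main obstacle I expect is the inductive step. When a node $v\in T$ originally forwarded jobs to several distinct out-neighbors, the tree collapses all of $v$'s excess along the single edge $(v,\sigma(v))$, redistributing load locally and possibly over-loading $\sigma(v)$ while starving the other former targets. The careful claim to verify is that this concentration does not strictly increase the total non-root service in the tree: mass that in the original was served on a bypassed branch must still be absorbed somewhere along the cascade that targets $u$, and by the closure step every such absorbing node lies in $T$. Making this simultaneous bookkeeping airtight across the whole tree is where I expect the bulk of the technical work to lie.
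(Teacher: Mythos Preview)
Your tree construction is reasonable, but the accounting step has a genuine gap: the inequality
\[
\sum_{v\in T\setminus\{u\}}\min(\ell_v^T,B_v)\ \le\ |J_T|-|J_u|
\]
is false in general. The problem is that a job can pass through $u$ (hence be counted in $J_u$) and then be served at some other tree node $v$; such a job is counted on both sides of your intended partition ``$|J_u|$ that reach $u$'' versus ``$|J_T|-|J_u|$ that are served elsewhere.'' Concretely, take $G$ with nodes $u,v$, edges $u\to v$ and $v\to u$, and $B_u=B_v=1$. Let two jobs arrive at $u$ and one at $v$, with realized paths $P_1=(u)$, $P_2=(u,v)$, $P_3=(v,u)$ (one can check the min-work condition holds). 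Then $\ell_u(\Paths)=3$, so $|J_u|=3$. Your construction yields $T=\{u,v\}$ with edge $(v,u)$, and $|J_T|=3$. In the tree routing $\ell_v^T=1$, so the left side equals $1$ while $|J_T|-|J_u|=0$. The lemma's conclusion still holds here ($\ell_u^T=2\ge B_u$), but your route to it does not. A second, smaller issue: if a realized path contains a cycle, say $(a,b,a,u)$, your prefix-grafting adds $a$ twice with two distinct out-edges, so $T$ is not a tree; you would need to remove cycles first, which you do not mention.

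The paper avoids both difficulties by working at the level of the \emph{multigraph} of realized edges rather than at the level of individual paths. It first shows that the min-work condition exactly characterizes multigraphs that decompose into valid path systems, and uses this to strip directed cycles while preserving every node's overload status. On the resulting acyclic multigraph it then repeatedly applies two local, min-work-preserving operations---deleting edges into dead ends other than $u$, and replacing one of two internally disjoint $i\!\to\!j$ paths by a duplicate of the other---processing vertices in reverse topological order until the edge set is a tree rooted at $u$. Since each operation preserves validity and does not decrease the load at $u$, the tree routing still overloads $u$. If you want to rescue your direct-construction approach, you would need to replace the failed inequality with something weaker (e.g., compare total tree service to $|J_T|-B_u$ rather than $|J_T|-|J_u|$) and argue carefully about jobs that cross $u$; but this essentially forces you back to a min-work-style invariant, which is what the paper's argument maintains throughout.
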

The first step in the proof consists of removing cycles in $\realPaths$ while preserving the set of overloaded vertices.  
In the second step we reroute the paths so that they form a tree. 
The proof of this lemma is deferred to the appendix. 
An example is presented in Figure~\ref{fig:Routing}.

\subsection*{Part 2: Reducing to a tree of trees}

Lemma \ref{lem:step1-Tree} does not reduce the analysis of the network of servers setting  to the analysis of a single tree, because the particular tree
that gives the worst-case load on $u$
depends on the realized arrivals $\ba = (a_i)_{i=1}^n$.
The lemma does show, however, 
that to complete the proof of Theorem \ref{thm:Network-main} it
suffices for us to bound the probability $\Pr\left[{\max_{T
      \in \Trees{u}}\ell_u^T (\bA) \ge B_u}\right]$ for each node $u$,
      where $\bA :=(A_i)_{i=1}^n$. (Recall that random variable $A_i$ denotes the number of external arrivals at node $i$, 
and that  the different $A_i$'s are mutually independent.)

In order to analyze this quantity, we will construct a new tree
network\footnote{ We call this the {\em tree of trees}.} $\binaryTree _u$ over an expanded set of nodes that contains
{\em every} tree $T\in \Trees{u}$ as a subtree. The tree $\binaryTree _u$ is defined as follows: There is a node $v_P$ in $\binaryTree _u$ for each simple directed path $P$
in $G$ terminating at $u$, and there is an edge in $\binaryTree _u$ from $v_P$ to $v_{P'}$, if   $P=iP'$ for some node
$i$ in $G$. By construction, each tree $T$ in $\Trees{u}$ has a unique isomorphic copy
in $\binaryTree _u$. See Figure \ref{fig:AllMulticolorTrees}. \footnote{ This construction blows up the number of nodes exponentially, but this does not affect us since 
our bound (\Cref{lemma:TreeProcess}) is independent of the number of nodes in the tree.}

We then consider the network of servers process on $\binaryTree_u$,
under the assumption that for every node $v_P$ such
that $P=iP'$, the number of external arrivals at $v_P$
is $A_i$, and also that as long as a job is not serviced it is forwarded
along the next edge in the tree.
 Then for any tree
$T\in\Trees{u}$, the load on the node corresponding to $i$ in the isomorphic copy of $T$ in
$\binaryTree_u$ is no smaller than the load on $i$ 
in $T$ under the same set of arrivals. In particular,
\begin{equation}\label{eq:treeoftrees}
\textstyle \Pr\left[{\ell_u^{\binaryTree _u} (\bA) \ge B_u}\right]\ge \Pr\left[{\max_{T \in \Trees{u}}\ell_u^T (\bA) \ge B_u}\right].
\end{equation}

Unfortunately, we cannot analyze $\ell_u^{\binaryTree _u}(\bA)$ as in
the proof of Lemma~\ref{lemma:TreeProcess}, since the external arrivals at
different nodes are correlated.  In particular, for each node $i$ in $G$,
there are $n_i$ nodes in $\binaryTree _u$ at which the entries
$A_i$ are the same, where $n_i$ is the number of different directed
simple paths from $i$ to $u$ in $G$.  The key step in the rest of the
proof is to show that replacing these by independent draws from the
same distribution can only (stochastically) increase the load at $u$. 
To this end, we require the following ``decorrelation'' lemma:
\begin{lemma}
	\label{lem:probmore}
{\bf (Decorrelation lemma for the max function)}
	Let $g_{\ell}:\Re \rightarrow \Re$, $\ell=1, \ldots, k$ be any non-decreasing functions,  $X$ be any real valued random variable, and let $Y_1, \ldots, Y_k$ be independent and identically distributed random variables from the same distribution as $X$.
	Then, 
	\[ \textstyle \max_{\ell}\{g_{\ell} (Y_{\ell})\} \stgeq \max_{\ell}\{ g_{\ell}(X)\}  ,\]
	where
$\stgeq$ denotes stochastic dominance.
\end{lemma}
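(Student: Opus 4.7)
The plan is to establish the stochastic dominance by showing, for every real threshold $t$, that
\[ \Pr\!\left[\max_{\ell} g_\ell(Y_\ell) < t\right] \leq \Pr\!\left[\max_{\ell} g_\ell(X) < t\right], \]
which is equivalent to $\Pr[\max_\ell g_\ell(Y_\ell) \geq t] \geq \Pr[\max_\ell g_\ell(X) \geq t]$. Fix such a $t$ and, for each $\ell$, define the sublevel set $S_\ell := \{x \in \reals : g_\ell(x) < t\}$ and its probability $p_\ell := \Pr[X \in S_\ell] = \Pr[g_\ell(X) < t]$. Then the two events above can be rewritten in ``intersection'' form:
\[ \{\max_{\ell} g_\ell(Y_\ell) < t\} = \bigcap_{\ell} \{Y_\ell \in S_\ell\}, \qquad \{\max_{\ell} g_\ell(X) < t\} = \bigcap_{\ell} \{X \in S_\ell\}. \]

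The key structural observation is that each $g_\ell$ is non-decreasing, which forces $S_\ell$ to be a downward-closed subset of $\reals$ (i.e.\ an interval of the form $(-\infty,c_\ell)$ or $(-\infty,c_\ell]$): if $g_\ell(x_0) < t$ and $x \leq x_0$, then $g_\ell(x) \leq g_\ell(x_0) < t$. Since any two downward-closed subsets of $\reals$ are totally ordered by inclusion, the collection $\{S_1,\dots,S_k\}$ is nested, and there is some index $\ell^*$ with $S_{\ell^*} = \bigcap_{\ell} S_\ell$. Consequently,
\[ \Pr\!\left[\bigcap_{\ell} \{X \in S_\ell\}\right] = \Pr[X \in S_{\ell^*}] = p_{\ell^*} = \min_{\ell} p_\ell, \]
where the last equality is because $S_{\ell^*} \subseteq S_\ell$ for every $\ell$ implies $p_{\ell^*} \leq p_\ell$.

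For the $Y$-side, independence of $Y_1,\dots,Y_k$ and the fact that each $Y_\ell$ has the same distribution as $X$ give
\[ \Pr\!\left[\bigcap_{\ell}\{Y_\ell \in S_\ell\}\right] = \prod_{\ell} \Pr[Y_\ell \in S_\ell] = \prod_{\ell} p_\ell. \]
Since every $p_\ell \in [0,1]$, we have $\prod_\ell p_\ell \leq \min_\ell p_\ell$, which is exactly the desired inequality; letting $t$ range over $\reals$ yields the stochastic dominance. I do not anticipate a real obstacle here: once the downward-closedness of the $S_\ell$ is noted, the proof reduces to the elementary inequality $\prod p_\ell \leq \min p_\ell$, and the chief care needed is simply the bookkeeping between strict and non-strict inequalities (the argument is insensitive to whether each $S_\ell$ is open or half-open at its right endpoint, since only $\Pr[X \in S_\ell]$ enters).
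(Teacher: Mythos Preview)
Your proof is correct and follows essentially the same approach as the paper: both compute the ``correlated'' probability as $\min_\ell p_\ell$ (using monotonicity of the $g_\ell$) and the ``decorrelated'' probability as $\prod_\ell p_\ell$ (using independence), then invoke $\prod p_\ell \le \min p_\ell$. Your use of nested downward-closed sublevel sets is a slightly cleaner packaging than the paper's $g_\ell^{-1}(a):=\sup\{x:g_\ell(x)\le a\}$, since it sidesteps the bookkeeping about whether the sup is attained, but the argument is the same.
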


Applied to our setting, the decorrelation lemma gives us the following result.
\begin{lemma}
\label{SDApp}
For each $i$ and directed simple path $P$ from $i$ to $u$, let
$A_{i,P}$ be an independent draw from the distribution of $A_i$, let 
$P_i(T)$ be the unique path from $i$ to $u$ in tree $T$, and let
$\stgeq$ denote stochastic dominance. Then
$$\textstyle \max_{T \in \Trees{u}}  \ell_u^T \left((A_{i, P_i(T)})_{i \in T}\right)\stgeq\max_{T \in \Trees{u}}\ell_u^T (\bA).$$

\end{lemma}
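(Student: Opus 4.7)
The plan is a node-by-node hybrid argument that gradually replaces the shared arrivals $A_i$ by the path-indexed independent copies $A_{i,P}$, applying \Cref{lem:probmore} at each step. Fix any enumeration $1, \dots, n$ of the nodes of $G$ and, for $k = 0, 1, \dots, n$, define $M_k := \max_{T \in \Trees{u}} \ell_u^T(\ba^{(k)}(T))$, where for each tree $T$ the arrival at node $i \in T$ is taken to be the independent copy $A_{i, P_i(T)}$ if $i \leq k$, and the shared $A_i$ if $i > k$. Then $M_0$ equals (in distribution) the right-hand side of the claimed inequality, and $M_n$ equals the left-hand side. The plan is to show that $M_k \stgeq M_{k-1}$ for each $k$; transitivity of stochastic dominance will then complete the proof.

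Before turning to the hybrid step, I first record two observations about $\ell_u^T$. First, in a directed tree in which every non-root node has out-degree $1$ toward the root, the load obeys the deterministic recursion $\ell_v = a_v + \sum_{w:\, w \to v} \max(0, \ell_w - B_w)$, so $\ell_u^T$ is well-defined as a function of the arrivals at the nodes of $T$ alone, independent of the adversarial ordering. Second, the same recursion, applied inductively from leaves to root, gives that $\ell_u^T(\ba)$ is non-decreasing in every coordinate $a_i$ with $i \in T$.

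For the hybrid step, I would condition on all random variables except those attached to node $k$, namely on $(A_j)_{j \neq k}$ together with $(A_{j, P})_{j < k,\,P}$. Partition $\Trees{u}$ into $\mathcal{T}_0 := \{T : k \notin T\}$ and, for each simple directed path $P$ from $k$ to $u$ in $G$, $\mathcal{T}_P := \{T : P_k(T) = P\}$. For $T \in \mathcal{T}_0$, $\ell_u^T$ does not depend on the arrival at $k$, so its conditional value is a constant; let $C$ be the maximum of these constants. For each $P$, let $g_P(x)$ be the maximum over $T \in \mathcal{T}_P$ of $\ell_u^T$ with the $k$-th coordinate set to $x$ and all other coordinates at their conditioned values; by the monotonicity observation, $g_P$ is non-decreasing. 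Hence, conditionally,
\[
M_{k-1} = \max\Bigl(C,\ \max_P g_P(A_k)\Bigr), \qquad M_k = \max\Bigl(C,\ \max_P g_P(A_{k,P})\Bigr),
\]
where the $A_{k,P}$ are i.i.d.\ copies of $A_k$. Applying \Cref{lem:probmore} with $X := A_k$ and $Y_P := A_{k,P}$ gives $\max_P g_P(A_{k,P}) \stgeq \max_P g_P(A_k)$, hence $M_k \stgeq M_{k-1}$ conditionally, and therefore unconditionally.

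The main conceptual obstacle is to see that, after conditioning on everything away from node $k$, the dependence of the tree-wise loads on the $k$-th coordinate can be summarized---via the partition of $\Trees{u}$ by the path $P_k(T)$---as a single non-decreasing scalar function $g_P$ per path from $k$ to $u$. With that reduction in hand, \Cref{lem:probmore} applies directly; the remainder is bookkeeping and the routine verification of monotonicity of $\ell_u^T$.
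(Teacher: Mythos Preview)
Your proof is correct and follows essentially the same approach as the paper: the paper first packages the hybrid argument into a general decorrelation lemma (\Cref{lem:probmoreFull}) and then applies it with $h_T(\bA)=\ell_u^T(\bA)$, whereas you inline that same hybrid argument directly, with your partition of $\Trees{u}$ by $P_k(T)$ and functions $g_P$ corresponding exactly to the paper's grouping by $P_j(k)=\ell$ and functions $f_\ell$. Your explicit treatment of monotonicity of $\ell_u^T$ and of the trees not containing node $k$ fills in details the paper leaves implicit.
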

Theorem \ref{thm:Network-main} now follows by observing that (\ref{eq:treeoftrees}) holds also w.r.t. the arrivals $(A_{i, P})_{i \in [n], P\in \binaryTree_u}$. 
Then  $ \ell_u^{\binaryTree_u} \left(  (A_{i, P})_{i \in [n], P\in \binaryTree_u} \right)$ can be analyzed using \Cref{lemma:TreeProcess} since 
we now have the required independence. 
Lemmas~\ref{lem:step1-Tree} and \ref{SDApp} complete the sequence of inequalities.

\section{Stability of service for unit-length jobs}
\label{sec:temporal}

We now return to the temporal resource allocation problem
and prove the stability of service theorem,
Theorem~\ref{thm:stability-cloud}, for the special case where each job
has unit length, that is, $l_j=1$ for all $j$. The non-unit length case is
discussed in Section~\ref{sec:non-unit}.

We fix $\epsilon$ as stated in the theorem, as well as the set of
prices given by the fractional assignment lemma
(Lemma~\ref{lem:fractional-assignment}). Then, for the asynchronous
assignment process induced by these prices, we construct an instance
of the network of servers setting discussed in Section~\ref{sec:network}
that satisfies the assumptions made in
Theorem~\ref{thm:Network-main}. Applying that theorem would then imply
Theorem~\ref{thm:stability-cloud}.

The obvious way to reduce from the temporal setting to
the network of servers setting was described at the end of Section \ref{sec:prelims}:
construct a forwarding graph $G$ over the set of all time
slots $t\in \integers_+$ so that it
contains all edges $(t, t')$ that are in some
job's preference order  $\Pathj$ over time slots.\footnote{ In other words,
$t'$ is the next slot after $t$ that some job $j$ prefers, given that job's
window and the slot prices.}
Unfortunately, the graph so defined can have
unbounded in-degree. See Figure~\ref{fig:HighIndegree}.  Observe  though 
that in
this example, the path of every job that is forwarded to node $t$ goes through the node $t+1$. As such, each of these jobs
is effectively forwarded from $t+1$ to $t$. Taking
inspiration from this example, we will proceed as follows. For every
instantiation of job arrivals and preference orderings, we will define
a canonical ``shortcutting'' of the jobs' paths, such that the
overload status of every time slot is maintained. We will then show
that the union of the shortcut paths over all possible instantiations
defines a bounded degree graph. We can then apply
Theorem~\ref{thm:Network-main}.

We give a brief overview of this argument below. Details can be found
in Appendix~\ref{sec:temporal-full}. 

\paragraph{The network of servers.}
We begin by defining a directed graph $D$ on the set of all time slots
$\integers_+$ as follows. For every time slot $t\in \integers_+$, define 
$\ell(t) =
\max \{s<t: p_s\le p_t\}$ and $r(t) = \min \{s>t: p_s< p_t\}$ to be
the left and right ``parents'' of $t$. Let $E_F := \{ (\ell(t), t) \cup
(r(t), t ) \quad \forall t\in \integers_+\}$; we call this the set of {\em
  forward} edges. Let $E_B: = \{ (b(t), t) \quad \forall t\in \integers_+\}$
where $b(t)= \min \{s>t: p_s= p_t\}$; we call this the set of {\em
  backward} edges. The directed graph $D$ on vertex set $\integers_+$ is then 
  defined as 
  $D := (\integers_+,E_F \cup 
  E_B)$. Observe
that every node $t\in \integers_+$ in this graph has in-degree at most $3$.
Figure \ref{fig:BoxesRedBlue} illustrates the forward edges in this construction.

Let $\tilde D$ denote the graph formed by just the forward edges:
$\tilde D = (\integers_+,E_F)$. For any $t\in \integers_+$, let $\Anc(t)$ 
denote the {\em
  ancestors} of $t$ in $\tilde D$, that is, $\Anc(t) = \{s \text{ such
  that there is a path in }\tilde D\text{ from } s\text{ to }t\}$.

\paragraph{The reduction.}
We now consider the network of servers setting over the graph $D$, and
describe a specific realization of jobs and paths for every
realization of jobs and paths in the temporal setting. The set of
arriving jobs and their entry nodes are the same in the two
settings. We need to redefine the realized paths of the jobs to
follow the edges in $D$. 

Recall that in the temporal setting, each arriving job had a preference
ordering $\Pathj$ over time slots in its window. We complete the description of $\Pathj$ by specifying how ties are broken:
$\Pathj$ begins at the
node $y_j\in\favj$ to which it is assigned in the fractional
assignment\footnote{Note that this node may be a random variable, but
	is always among the favorite nodes of the job. The job is
	indifferent over all the nodes over which we tie-break.}  returned
by Lemma~\ref{lem:fractional-assignment}. It then visits other nodes
in $\favj$, if any, in a particular order: first, it visits all nodes
$t\in \favj$ with $t<y_j$ in decreasing order of time, then it visits
all nodes $t\in \favj$ with $t>y_j$ in increasing order of
time. Having visited all of the least price slots in its window, the job
then visits all slots of the next smallest price in its window in
increasing order of time, and so on. See Figure~\ref{fig:BoxesLRU} for
an illustration.

Let $P_j$ be the realized path of job $j$, namely, the prefix of $\Pathj$ from
$y_j$ to the node (call it $z_j$) where the job receives service or exits 
the process. We use $P_j^1$ to denote the prefix of this
path which visits nodes $t\in \favj$ with $t<y_j$ in decreasing order
of time; this always contains the node $y_j$. The remaining suffix of
$P_j$, if non-empty, is denoted $P_j^2$. Observe that every edge in
$P_j^1$ is a backward edge. However, edges in $P_j^2$ don't
necessarily belong to $D$.

Let ${\tilde P}_j^2 = P_j^2\cap \Anc(z_j)$, in other words, we remove
from $P_j^2$ all of the nodes that are not ancestors of $z_j$ in the
graph formed by the forward edges, $\tilde D$. The resulting path is a
{\em short-cut} of the original path of the job. We now define
${\tilde P}_j$, a path from $y_j$ to $z_j$, as follows. If $z_j\in
P_j^1$, then $\tilde P_j := P_j^1$; otherwise, we define an
appropriate prefix of $P_j^1$ called $\tilde P_j^1$, and set $\tilde
P_j = \tilde P_j^1\cup\tilde{P}_j^2$. Observe that $\tilde P_j$ is a
short-cutting of $P_j$. Furthermore, every node that we short-cut in
this process forwarded the job $j$, and is therefore overloaded. We
can now prove the following two lemmas. See
Figure~\ref{fig:Shortcutting} for an illustration of the short-cutting
procedure.

\begin{lemma}
\label{lem:path-reduction-1}
 Paths ${\tilde P}_j$ as defined above lie in the graph $D$.
\end{lemma}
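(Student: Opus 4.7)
The plan is to verify that every edge of $\tilde P_j$ belongs to $E_F \cup E_B$, by splitting $\tilde P_j$ into three portions: the backward piece inside $P_j^1$, the forward piece $\tilde P_j^2$, and a single bridging edge between them when $z_j \in P_j^2$. For the backward piece, consecutive nodes $t' > t$ in $P_j^1$ both lie in $\favj$ and share the common favorite price $p^{*}$. Any intermediate slot $s \in (t, t')$ lies in the interval $W_j$; if $p_s = p^{*}$ then $s \in \favj$, contradicting consecutiveness of $t, t'$ in the $\favj$-enumeration by decreasing time. Hence no intermediate slot has price $p^{*}$, so $t' = b(t)$ and $(t', t) \in E_B$.

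For the forward piece, I claim any consecutive pair $(u_1, u_2)$ in $\tilde P_j^2$ satisfies $u_1 = \ell(u_2)$ when $u_1 < u_2$ and $u_1 = r(u_2)$ when $u_1 > u_2$, placing the edge in $E_F$. Assume $u_1 < u_2$ and suppose $\ell(u_2) = s \ne u_1$; then $u_1 < s < u_2$, $p_s \le p_{u_2}$, $s \in W_j$, and $s \in \Anc(u_2) \subseteq \Anc(z_j)$. If $p_s \ge p_{u_1}$, the priority ordering of $P_j^2$ (increasing price, then increasing time within a price level) forces $s$ to appear strictly between $u_1$ and $u_2$ in $P_j^2$ and hence in $\tilde P_j^2$, contradicting consecutiveness. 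If instead $p_s < p_{u_1}$, I apply a shielding argument: every forward edge moves to a node of equal-or-greater price, so a forward path starting at $u_1$ cannot cross $s$---an $\ell$-edge past $s$ would leave the low-priced $s$ as an intermediate slot violating the maximality in $\ell$'s definition, and $r$-edges only go leftward; symmetrically from $u_2$. Thus descendants of $u_1$ and of $u_2$ in $\tilde D$ occupy disjoint time ranges separated by $s$, contradicting the existence of the common descendant $z_j$. The case $u_1 > u_2$ is symmetric via $r(u_2)$.

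For the bridging edge (when $z_j \in P_j^2$), let $u_1^{*}$ denote the first node of $\tilde P_j^2$. If $u_1^{*} > y_j$, then $y_j$ is a candidate for $\ell(u_1^{*})$ (since $p_{y_j} = p^{*} \le p_{u_1^{*}}$), and any strictly larger candidate $v \in (y_j, u_1^{*})$ would satisfy $p_v \le p^{*}$, forcing $v \in \favj \cap \Anc(z_j)$ and hence $v \in \tilde P_j^2$ strictly before $u_1^{*}$---a contradiction; so $\ell(u_1^{*}) = y_j$, and taking $\tilde P_j^1 := (y_j)$ gives the bridge $(y_j, u_1^{*}) \in E_F$. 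If $u_1^{*} < y_j$, then $u_1^{*} \notin \favj$ so $p_{u_1^{*}} > p^{*}$ and $r(u_1^{*})$ exists with $y_j$ a candidate; by a symmetric argument $r(u_1^{*}) \in \favj \cap (u_1^{*}, y_j] \subseteq P_j^1$, so setting $\tilde P_j^1$ to the prefix of $P_j^1$ ending at $r(u_1^{*})$ yields the bridge $(r(u_1^{*}), u_1^{*}) \in E_F$. Combined with the two preceding parts this proves $\tilde P_j \subseteq D$.

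The main obstacle is making the shielding argument precise: one must verify, using the strict inequality in the definition of $r$ and the non-strict inequality in that of $\ell$, that no sequence of forward edges from either side of a low-priced intermediate slot can cross to the other side---this relies on forward-edge prices being monotone non-decreasing along a path, with equality permitted only via $\ell$-edges that themselves cannot leapfrog a strictly lower-priced node.
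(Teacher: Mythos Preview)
Your strategy is genuinely different from the paper's. The paper argues the forward piece by induction from $s_k=z_j$ backward, leaning on Lemma~\ref{lem:PropertiesAncestors}(b) (the total order on $\Anc(z_j)$): assuming WLOG $\ell(s_k)=\ell(r(s_k))$, it shows that if $s_{k-1}\notin\{\ell(s_k),r(s_k)\}$ then $r(s_k)\notin W_j$, and then derives a contradiction from the existence of a forward edge jumping across $\ell(s_k)$. You instead treat each consecutive pair directly and use a ``shielding'' argument to rule out a strictly cheaper intermediate slot; this is more elementary and avoids the total-order lemma, at the cost of more case analysis.

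Two concrete steps fail as written. First, the ``symmetry'' for $u_1>u_2$ is not exact. With $s=r(u_2)$ and $u_2<s<u_1$, the boundary case $p_s=p_{u_1}$ does \emph{not} force $s$ strictly between $u_1$ and $u_2$ in $P_j^2$: since $s<u_1$ at the same price level (increasing-time tiebreak), $s$ actually precedes $u_1$; worse, if $p_{u_1}=p^*$ and $s\le y_j$ then $s\in P_j^1$ and is absent from $\tilde P_j^2$ altogether. The fix is to absorb $p_s=p_{u_1}$ into the shielding case (so split as $p_s>p_{u_1}$ versus $p_s\le p_{u_1}$ on that side); your shielding argument then goes through, using that $r$-edges strictly increase price to rule out equal-price crossings. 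Second, in the bridge case $u_1^{*}>y_j$, the assertion ``any strictly larger candidate $v\in(y_j,u_1^{*})$ satisfies $p_v\le p^{*}$'' is false: being a candidate for $\ell(u_1^{*})$ only gives $p_v\le p_{u_1^{*}}$. What actually works is the same mechanism as your forward piece: any such $v$ lies in $W_j$, precedes $u_1^{*}$ in the $(\text{price},\text{time})$ order of $P_j^2$ (checking the $p_v=p^{*}$ case via $v>y_j$), and is in $\Anc(z_j)$ via $(v,u_1^{*})\in E_F$, so $v\in\tilde P_j^2$ before $u_1^{*}$, contradicting minimality. With these two repairs your route is complete.
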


\begin{lemma}
\label{lem:path-reduction-2}
The collection of paths $\tilde\realPaths=({\tilde P}_j)$, as defined
above, satisfies the min-work condition. Further, 
a node $t\in \integers_+$ is overloaded under the realized paths
$\tilde \realPaths$ if and only if it is overloaded under the realized
paths $\realPaths$. That is, $\ell_t(\realPaths)\ge B_t$ if and only if
$\ell_t(\tilde\realPaths)\ge B_t$.
\end{lemma}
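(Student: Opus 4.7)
The plan is to derive both claims from a single structural observation about the shortcutting procedure, after which the rest is essentially bookkeeping.

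The observation I would prove first is this: for every job $j$, the node set of $\tilde P_j$ is contained in the node set of $P_j$, and $\tilde P_j$ has the same terminal node $z_j$ as $P_j$. The node-subset part holds because shortcutting only drops nodes --- either a suffix of $P_j^1$ not included in the prefix $\tilde P_j^1$, or those nodes of $P_j^2$ that lie outside $\Anc(z_j)$. Endpoint preservation holds because $z_j \in \Anc(z_j)$ trivially (so $z_j \in \tilde P_j^2$ whenever $z_j \in P_j^2$), while in the case $z_j \in P_j^1$ the entire path $P_j^1$ is kept as $\tilde P_j$. An immediate consequence, used below, is that any node $i$ removed from $P_j$ by shortcutting must have been a non-terminal node of $P_j$, which means $i$ forwarded job $j$ in the original process and was therefore already overloaded under $\realPaths$ (it had served $B_i$ prior jobs).

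From these two properties, the overload equivalence falls out cleanly. The node-subset property gives $\ell_t(\tilde \realPaths) \le \ell_t(\realPaths)$ pointwise, yielding the $(\Leftarrow)$ direction. For $(\Rightarrow)$, suppose $\ell_t(\realPaths) \ge B_t$; then $t$ serves $B_t$ jobs in the original process, each with $z_j = t$. By endpoint preservation, every one of these $B_t$ jobs still visits $t$ in $\tilde P_j$, so $\ell_t(\tilde \realPaths) \ge B_t$.

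Next I would verify the min-work condition. Fix a node $i$ that forwards some job $j$ in $\tilde \realPaths$, meaning $i \in \tilde P_j$ and $i \ne z_j$. By the node-subset property, $i \in P_j$ with $i \ne z_j$, so $i$ forwarded $j$ in the original process too; hence $i$ is overloaded under $\realPaths$ and serves exactly $B_i$ jobs $j'$, each with $z_{j'} = i$. Endpoint preservation then forces these same $B_i$ jobs to terminate at $i$ in $\tilde \realPaths$, so at least $B_i$ paths in $\tilde \realPaths$ end at $i$, and the number of paths forwarded through $i$ is at most $\ell_i(\tilde \realPaths) - B_i$, which is exactly the min-work bound. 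I do not anticipate any real obstacle: the delicate work was already done in defining $\tilde P_j$, and the remaining argument is pure combinatorial accounting. The only mild subtlety is ensuring that endpoint preservation is verified in both clauses of the definition of $\tilde P_j$ (the cases $z_j \in P_j^1$ and $z_j \in P_j^2$), which is why I split the observation into those two cases above.
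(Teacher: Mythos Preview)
Your proposal is correct and follows essentially the same reasoning as the paper. The paper's proof is a one-liner that repeatedly invokes Corollary~\ref{lem:short-cut} (each single short-cut of an intermediate vertex preserves both the min-work condition and the overload status of every node); you instead argue the whole thing in one shot via the two structural facts that $\tilde P_j \subseteq P_j$ as node sets and that the terminal vertex $z_j$ is preserved. These are exactly the properties that make each individual application of the corollary go through, so the two arguments are the same underneath, with yours unpacked rather than packaged as an induction over single-vertex removals.
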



\afterpage{%
\thispagestyle{empty}
\begin{figure}[H]
\begin{center}
\includegraphics[width=0.37\textwidth]{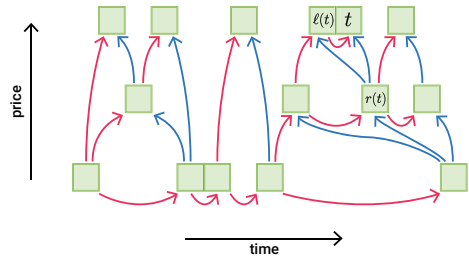}
\end{center}
\caption{
\small{This figure shows the set of all forward edges in the directed, acyclic graph $D$ on a set of time slots. Each time slot is represented by a green square, with its height indicating its price. Red edges go from $\ell(t)$ to $t$, and blue edges go from $r(t)$ to $t$ for each $t$.}}
\label{fig:BoxesRedBlue}
\end{figure}

\begin{figure}[H]
\begin{center}
\includegraphics[width=0.37\textwidth]{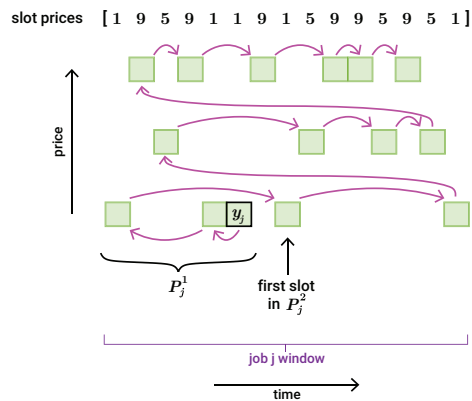}
\end{center}
\caption{
\label{fig:BoxesLRU}
\small{This figure shows the canonical path of a job over the time slots.
The top line shows the prices of each of the time slots. The job enters at slot $y_j$. The decomposition of the path into $P_j^1$ and $P_j^2$ is also illustrated.}}
\end{figure}

\begin{figure}[H]
\begin{center}
\includegraphics[width=0.8\textwidth]{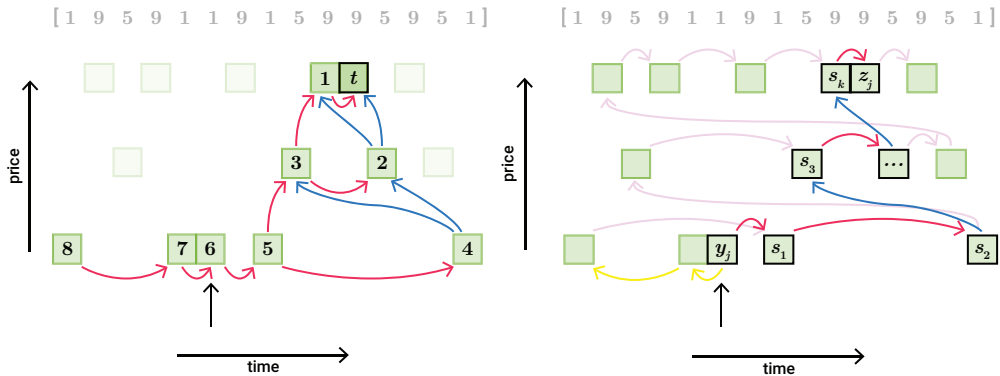}
\end{center}
\caption{
\label{fig:Shortcutting}
\small{The figure on the left displays ancestors of $t$ in $\tilde D$ numbered in reverse topological order. The figure on the right displays
the shortcutting of the path from Figure~\ref{fig:BoxesLRU}. The yellow path is $P_j^1$, and the red and blue path is $\tilde{P}_j^2$. In this case, $\tilde{P}_j = \{y_j\}\cup\tilde{P}_j^2$.}}
\end{figure}
}

We are now ready to prove the stability of service theorem for
unit-length jobs. Observe that the instance of the network of items
setting described above satisfies all of the properties required by
Theorem~\ref{thm:Network-main}. In particular, for every time slot
$t$, the number of arrivals $A_t$ is given by $\sum_j q_j
\hat{X}_{j,t}$, where $\{X_{j,t}\}$ is the fractional assignment given
by Lemma~\ref{lem:fractional-assignment} and $\hat{X}_{j,t}$ is a
Bernoulli random variable with expectation $X_{j,t}$. Therefore, it
can be verified that $\EE { e^ {\epsilon (A_t- B_t)}} \le \epsilon
^2/3e$ for all $t$, and every job gets serviced with probability at
least $(1-\epsilon)$ times its total fractional assignment.

\addtocounter{page}{-1}

\section{Stability of service for arbitrary length jobs}
\label{sec:non-unit}

We now turn to the temporal resource allocation problem for jobs of
arbitrary lengths, and prove Theorem~\ref{thm:stability-cloud}. 
As in Section~\ref{sec:temporal}, we fix any $\epsilon>0$, and a set
of prices for the time slots as given by
Lemma~\ref{lem:fractional-assignment} for this $\epsilon$. Recall that
$p_t(l)$ denotes the total price for $l$ consecutive units of resource
starting at time $t$. A job $j$ of length $l_j$ can choose to buy
$l_j$ or more consecutive units of resource depending on availability
at these prices; we call these consecutive units ``time blocks'' and
denote them by the pair $(t,l)$ where $t$ is the starting time of the
block and $l$ its length. The prices induce for each job $j$ a
preference ordering $\Pathj$ over time blocks $(t,l)$ with $t\in W_j$
and $l\ge l_j$, and ties broken appropriately. As in the unit-length
case, jobs search for the first available time block in their
preference ordering in adversarial
order. Lemma~\ref{lem:fractional-assignment} guarantees that for every
time slot $t$, the expected number of arriving jobs whose first block
in their preference ordering starts at $t$ is at most
$(1-\epsilon)B_t$.

\paragraph{Correlation introduced by non-unit length jobs.} As in
Section~\ref{sec:temporal} we can think of the movement of jobs as
inducing paths in a graph over (starting) time slots. The challenge
with non-unit length jobs is that when considering a block $(t,l)$,
they need to check the availability of the resource at each of $l$
different slots; in other words, the forwarding decision for such jobs
at slot $t$ depends on loads at other neighboring slots, introducing
extra correlations in the forwarding process. Alternately, we can
think of the movement of jobs as inducing paths in a graph over {\em
  time blocks}. The challenge now is that we don't have a well defined
notion of capacity; rather each time block shares capacity with other
overlapping time blocks in a non-trivial manner.

\paragraph{Solution: capacity partitioning.} We adopt the second
approach. In order to overcome the challenge described above, we
decouple capacity constraints at time blocks by artificially limiting
the number of jobs assigned to any block. In particular, we assign a
capacity of $\Btl$ to time block $(t,l)$. Once $\Btl$ jobs have been
assigned to block $(t,l)$, even if there are available resources at
all slots in the interval $[t, t+l-1]$, we admit no more jobs at this
block. In order to respect the original capacity constraint at a time
slot $t\in \integers_+$, the capacities $\Btl$ must satisfy for all
$t$ the property that $\sum_{l\in[l_{\max}]} \sum_{t'\in [t-l+1, t]} \Btl[t']
\le B_t$.

Two issues remain: (1) How should the capacities be set to satisfy
the above per-slot capacity constraints while obtaining good social
welfare? (2) What process/graph does this induce over time blocks?

\paragraph{Setting the capacities.} We set capacities based on the
fractional assignment returned by
Lemma~\ref{lem:fractional-assignment}. Let $\{X_{j,t}\}$ denote this
fractional assignment. Then, we set $\Btl$ to be equal to $\sum_{j:
  l_j=l} q_j X_{j,t}$ plus a reserve capacity of $\epsilon' B_t$ where
$\epsilon'=\epsilon/\lmax^2$. It is immediate that the per-slot
capacity constraints are satisfied: for all $t\in \integers_+$, we
have,
$$ \sum_{l\in[l_{\max}]} \sum_{t'\in [t-l+1, t]} \Btl[t'] \le \sum_j
\sum_{t'\in [t-l_j+1, t]} q_j X_{j,t} + \epsilon' B_t \lmax^2 \le
(1-\epsilon)B_t + \epsilon B_t = B_t.$$ Furthermore, the fraction
assignment of Lemma~\ref{lem:fractional-assignment} gives a
$(1-\epsilon)$-approximation to social welfare while respecting the
block-wise capacity constraints in expectation.

\paragraph{Network over time blocks.} We will think of the graph over
time blocks as partitioned into $\lmax$ layers, with layer $\layerl =
\{(t,l)\}_{t\in\integers_+}$ corresponding to all blocks of length
$l$. Within each layer, the induced subgraph is a graph over
(starting) time slots. Each job's preference ordering, restricted to
layer $\layerl$, is identical to the preference ordering induced in
the unit-length case when slot prices are given by
$p_t(l)$. Accordingly, we define a network $D_l$ over $\layerl$ in a
manner analogous to the definition of network $D$ in
Section~\ref{sec:temporal} with respect to prices $\{p_t(l)\}$: $D_l =
(\layerl, E_{F, l}\cup E_{B, l})$. Finally, let $E_L = \{((t,l),
(t,l+1))\}_{t\in\integers_+, l\in[\lmax-1]}$ denote ``inter-layer'' edges
that go from each block $(t,l)$ to block $(t,l+1)$. Let $\DD = (\cup_l
\layerl, \cup_l (E_{F, l}\cup E_{B, l})\cup E_L)$.

Observe that the network $\DD$ has maximum in-degree $4$. We now argue
that the realized path $P_j$ of each job $j$ can be ``short-cut'' into
a path in the graph $\DD$. Suppose that the realized path of a job $j$
of length $l_j$ starts at block $(y_j, l_j)$ and terminates at block
$(z_j, l)$ for $l\ge l_j$. Observe that if $l>l_j$, prior to
considering block $(z_j, l)$, the job must have considered every block
$(z_j,l')$ with $l'\in [l_j, l-1]$; all of these blocks $(z_j,l')$ are
in $P_j$. Now, define the path ${\tilde P}_j$ in two parts as
follows. The first part is a short-cut of the prefix of $P_j$ from
$(y_j, l_j)$ to $(z_j, l_j)$ defined over the layer $\layerl[l_j]$ as
in Section~\ref{sec:temporal}. The second part is a sequence of
inter-layer edges connecting $(z_j,l')$ to $(z_j, l'+1)$ for
$l'\in [l_j, l-1]$. 

It is easy to see that ${\tilde P}_j$ is a short-cut of $P_j$ and lies
in the graph $\DD$.  Corollary~\ref{lem:short-cut} then implies that
the collection of realized paths $\realPaths'=({\tilde P}_j)$
satisfies the min-work condition and Theorem~\ref{thm:Network-main}
can be applied. It remains to argue that for every block $(t,l)$, the
moment generating function of $A_{t,l}-\Btl$ is
bounded, 
where $A_{t,l}$ is the random number of fresh arrivals at the
block. Recall that $\Btl = \sum_{j: l_j=l} q_j X_{j,t} + \epsilon'
B_t$ where $\epsilon'=\epsilon/\lmax^2$. On the other hand, $A_{t,l} =
\sum_{j: l_j=l} q_j \hat{X}_{j,t}$, where $\hat{X}_{t,l}$ is a
Bernoulli variable with expectation $X_{j,t}$. So, we have
$$\EE { e^ {\epsilon' (A_{t,l}- \Btl)}} \le e^{-\frac 12\epsilon'^2
  \Btl} \le e^{-\frac 12\epsilon'^3 B_t} \le \epsilon^{c/2} $$ which for
an appropriate constant $c$ is at most $\epsilon^2/4e$. Here the
second inequality used the fact that $\Btl\ge\epsilon' B_t$, and the
third used the lower bound on $B_t$ from the statement of
Theorem~\ref{thm:stability-cloud}.  Therefore,
Theorem~\ref{thm:Network-main} applies and each job is accepted with
probability at least $1-\epsilon'$. We achieve an approximation factor of
$(1-\epsilon')(1-\epsilon)\ge 1-2\epsilon$ for social welfare.

This concludes the proof of Theorem~\ref{thm:stability-cloud}. 

\paragraph{Truthfulness and job payments.} Truthfulness of the above
mechanism is straightforward to argue: each job is allocated the cheapest block
available that meets its requirements at the time of its
arrival. Observe that a job of length $l_j$ that is allocated block
$(t,l)$ for some $l>l_j$ must pay the price $p_t(l)$ (and not the
cheaper price $p_t(l_j)$) in the above mechanism. It is, however,
possible to modify our argument so that the theorem holds also when a
job of length $l_j$ can buy a slot $(t,l)$ with $l>l_j$ at a price of
$p_t(l_j)$. This change to the mechanism changes each job's
preference ordering and realized path, but realized paths can once
again be short-cut to form paths in $\DD$, and we obtain the same
conclusion as before. Finally, the new mechanism continues to be
truthful with respect to jobs' lengths: a job paying $p_t(l_j)$ for some
block $(t,l)$ with $l>l_j$ is terminated after $l$ steps, so it hurts
to report a length smaller than the true length.
\subsection*{Acknowledgements}
We are grateful to TJ Gilbrough for making almost all of the figures in this paper.
\bibliographystyle{plainnat}
\bibliography{queueing}
\newpage

\appendix
\section{Related Work} 
\label{sec:related-full} 

\paragraph{Mechanism design for online allocation.} In online settings, a mechanism is called \emph{prompt} if the payments are computed as soon as the job is scheduled, and is {\em tardy} if payments are computed at some later point in time (usually after the deadline of a job).
The truthful online scheduling problem has been extensively studied in the worst case competitive analysis framework. 
	Lavi and Nisan \cite{lavi2015online} introduced the problem of truthful online scheduling for unit length jobs on a single machine, with the social welfare objective, and showed that no deterministic mechanism that is truthful w.r.t. all the parameters can get an approximation ratio $< T$, where $T$ is the time horizon. 
	They proposed a weaker notion of truthfulness that they call set-Nash, and gave constant competitive mechanisms satisfying set-Nash.  
	Hajiaghayi et al.~\cite{hajiaghayi2005online}  gave a tardy 2 approximation for unit length jobs. Their mechanism is truthful with the assumption of no early arrivals and late departures. They also extended this to an $O(\log \lmax)$ approximation for jobs of different lengths, where $\lmax$ is the ratio of the maximum to minimum length of a job. 
	Cole et al.~\cite{cole2008prompt} gave a prompt 2 approximation for unit length jobs, that is truthful only w.r.t. the value. 
	They extended it to a prompt $O(\log \lmax)$ approximation for different length jobs, that is truthful w.r.t the value and the deadline. 
	Azar and Khaitsin~\cite{azar2011prompt} designed a prompt mechanism for unit length jobs with arbitrary \emph{width}, on a single machine, that is a 6-approximation. The mechanism is truthful only w.r.t. the value. 
	In a more recent work, Azar et al.~\cite{azar2015truthful} assumed that there is a lower bound $s$ on the \emph{slack} of each job, which is the ratio of the length of the  [arrival, deadline] window to the job's length. They obtained a $2 + O(1/(\sqrt[3]s-1) ) + O(1/(\sqrt[3]s-1)^{3} )$ approximation for arbitrary length jobs, via a mechanism that is truthful w.r.t all the parameters, under the assumption of no early arrival, no late departure, and no under-reporting of length. The mechanism is tardy, but can be modified to make decisions earlier with further assumptions on the slack. 
	In the absence of slack, even algorithmically (i.e., with no truthfulness constraints), the online problem with arbitrary length jobs has a lower bound on the competitive ratio that is polylogarithmic in $l$ or $\mu$, where $\mu $ is the ratio of the largest to smallest possible values \citep{canetti1998bounding}. 

Other results: 
	Although stated in terms of combinatorial auctions,  the results of Feldman et al.~\cite{feldman2015combinatorial} are relevant. 
	They show how posted prices can achieve a truthful $2$-approximation in combinatorial auctions with XOS biddders, in the \emph{Bayesian} setting. This implies a $2$-approximation for unit length jobs that is truthful w.r.t. all the parameters. 
The algorithmic problems of stochastic online matching and generalizations, under large budgets/capacities, are similar in spirit to the stochastic process we consider \cite{GoelMehta,DH09,Devanur2011,agrawal2014dynamic,Feldman10,KTRV14,agrawal2015fast}.  The temporal aspects of the two problems are very different, due to which standard models in that literature such as the random order model are not a good fit here.

\paragraph{Connections to queueing theory.}
Our models are closely related to standard models in queuing theory, when the demand and supply are stationary (i.e., not changing with time). 
In particular, for unit length jobs, suppose that $B_t=B$ for all $t$, the advertised prices are all equal, and every client tries to obtain service at the first slot in its window, failing which it moves its demand to the next time slot, and then the next, and so on.
This case corresponds to the standard  M/D/B queueing model, with Markovian arrivals, deterministic processing time, and $B$ servers, under the first-come first-served (FCFS) queuing discipline.\footnote{The notation for different queuing models is as follows: an A/B/C queue is one where the inter job arrival times are drawn from distributions in family A, the job lengths distributions belong to family B, and there are C identical machines. D is the class of deterministic distributions, M is the class of exponential distributions, GI is the class of general, independent distributions, GIB is the same class with a bounded support, and an $H_2^*$ distribution is a mixture of an exponential and a point mass.} We consider the regime where the rate of arrival of total work is $B- O(\sqrt B)$. 
While we would like to analyze the probability of completion of any given job within its deadline, an easier quantity to compute, that is also an upper bound on this, is the probability that all $B$ machines are busy, which is called the delay probability. 
		This question was studied already by the seminal paper of Erlang~\cite{erlang1948rational}, which initiated the study of queues. 
		In particular Erlang's C model refers to an M/M/B queue and a closed form expression for the delay probability is derived. 
	The importance of this regime was recognized by Halfin and Whitt~\cite{halfin1981heavy}, and is now called the Halfin-Whitt regime or the Quality and Efficiency Driven (QED) regime. This is because in this regime one can hope for high efficiency, which refers to a utilization ratio close to 1, and high quality, which refers to a delay probability close to 0.  Halfin and Whitt~\cite{halfin1981heavy} gave a formula for the delay probability of GI/M/B queues in this regime, 
	and this was extended to $H_2^*$ distributions (a mixture of exponential and a point mass) by \cite{whitt2005heavy}. 
	Jelenkovi{\'c} et al.~\cite{jelenkovic2004heavy} did the same for GI/D/B queues. 
	
	Good bounds on the delay probability for more general job length distributions are not known. In particular, it is open whether the delay probability is 
	bounded above by $\delta$ for all job length distributions when the rate of work is at most $B - c \sqrt{B \log(1/\delta)} $ for some universal constant c.
	In fact it is not even known if such a bound holds for all distributions supported on $[0,L]$, when the rate of work is at most $B -  \text{poly}(L) \sqrt{B \log(1/\delta)} $. 
	Whitt~\cite{whitt2004diffusion}  gives heuristic approximations for the delay probability and other related quantities for  GI/GI/B queues, and  
	Psounis et al.~\cite{psounis2005systems}	do the same for heavy tailed distributions, using an expression derived from  a ``bimodal" distribution.  
	These are not proven theorems, but are rather shown to be good approximations via numerical analysis, or using simulations on traces of real workloads. 
	The state of the art in this area is by Goldberg~\cite{goldberg2013steady}, who gives bounds on the delay probability as a limit of limits: the limit as $c\rightarrow \infty$, and as a function of $c$, the limit as $B \rightarrow \infty$, of the delay probability of GI/GI/queues with arrival rate of $B - c \sqrt{B}.$ The convergence is not uniform, as the rate depends on both the arrival and job length distributions.

\section{Proofs for Section~\ref{sec:network}}
\label{sec:network-full}

\subsection{The tree setting: proof of Lemma~\ref{lemma:TreeProcess}}

We begin by proving that the stability of service theorem holds for the network of servers setting when the network is a tree.

\begin{numberedlemma}{\ref{lemma:TreeProcess}}
Fix an $\epsilon$ in $[0,1/2]$. Suppose that the network $G$ is a
finite directed tree, that is, it contains no directed cycles and every node has out-degree $1$, and that the moment generating function of $A_i-B_i$
for each node $i$ satisfies $\EE { e^ {\epsilon (A_i- B_i)}} \le
\epsilon ^2/e\dmax.$
Then, for any $i$,
$\Pr[{\ell_i(\realPaths)\ge B_i}] \le \epsilon$.
\end{numberedlemma}

\begin{proof}
  Recall that when the network is a tree, every node $v$, after
  processing the first $B_v$ jobs that arrive at this node, forwards
  all of the remaining jobs to its parent.  
We call a node a {\em leaf} if it has no incoming edges. Order the nodes in the tree in topological order starting from the leaves. Let $F_v$ be the number of jobs forwarded by node $v$ to its parent, and let
$$F'_v := \max(A_v+\sum_{i=0}^dF_{u_i}-(B_v -1), 0).$$
Clearly $\ell_v(\realPaths)\ge B_v$ if and only if $F'_v > 0$.
We will prove by induction over the topological ordering that $$\EE{ e^{\epsilon F'_v}} \le \rho, \quad\text{ where }\quad \rho = 1 + \frac{\epsilon ^2}{d}.$$
The base case is a leaf (and follows from the argument below).

\begin{figure}[!htbp]
\begin{center}
\includegraphics[width=0.3\textwidth]{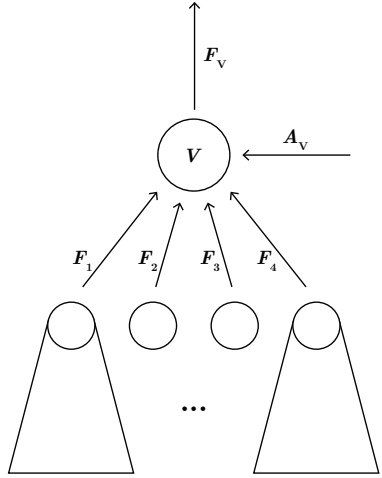}
\end{center}
\caption{
The induction step
}
\label{fig:VTree}
\end{figure}

For the induction step,
consider a node $v$ where up to $d$ predecessors are $u_1, \ldots, u_d$. (See Figure~\ref{fig:VTree}). By the induction hypothesis and
the fact that $F'_{u_i} \ge F_{u_i}$, we have
$$\EE{ e^{\epsilon F_{u_i} }} \le \rho$$
Note that $F_{u_i}$ and $F_{u_j}$ are independent for each distinct $i$ and $j$ since the trees rooted at them are disjoint, and they are also independent of $A_v$.
Thus, we have
\begin{align*}
\EE{e^{\epsilon F'_v}}  & \le \EE{e^{\epsilon \max(A_v+\sum_{i=0}^dF_{u_i}-B_v+1, 0)}}\\
& \le 1 + e^{\epsilon}\EE{e^{\epsilon (A_v+\sum_{i=0}^dF_{u_i}- B_v)}} \\
& \le 1 +  e^{\epsilon}\EE{e^{\epsilon (A_v- B_v)}}\rho^d\\
& \le 1 + \frac{e^{\epsilon}\epsilon ^2}{ed} \rho^d \le \rho,
\end{align*}
for $\rho$ defined as above. Here the last inequality follows by observing:
$$1 + \frac{e^{\epsilon}\epsilon ^2}{ed}\left(1 + \frac{\epsilon ^2}{d}\right)^d\le 1+ \frac{e^{\epsilon}\epsilon ^2}{ed} e^{d \ln \left(1 + \frac{\epsilon ^2}{d}\right)}\le
  1+ \frac{e^{\epsilon}\epsilon ^2}{ed} e^{\epsilon^2} \leq1 + \frac{\epsilon ^2}{d},$$
 as long as $\epsilon + \epsilon^2 \le 1$.
 Letting
$$\eta_v := \pr{F'_v >  0},$$
and recalling that $F'_v$ is integral, we have,
$$ 1-\eta_v + \eta_v e^{\epsilon} \le \EE{e^{\epsilon F'_v}}  \le \rho.$$
Solving for $\eta_v$, we obtain
$$\eta_v(e^{\epsilon}-1) \le \rho -1 = \frac{\epsilon ^2}{d}$$
so 
$$  \eta_v \le \frac{\epsilon ^2}{ d(e ^{\epsilon}-1)}  \le  \frac{\epsilon 
^2}{d\epsilon} \le \frac{\epsilon}{d}\le \epsilon.$$
Therefore, 
$$\pr{\ell_v(\realPaths)\ge B_v} = \pr{F'_v >  0} \le \epsilon.$$
\end{proof}

\subsection{Reducing to a tree: proof of Lemma~\ref{lem:step1-Tree}}

We will now prove that for every instantiation of arrivals and forwards in the network of servers setting on $G$ and every node $u$, we can find a subtree $T$ of $G$ rooted at $u$, such that the load at $u$ becomes worse when the process is run over the tree $T$. Before we restate the main result of this section, let us recall some notation. Let $a_i$ denote the realized number of  jobs arriving at node $i$ in $G$, and $\ba = (a_i)$; let $\realPaths$ denote the realized paths of jobs. Let $\Trees{u}$ denote the set of all directed subtrees of $G$ rooted at node $u$, and for $T\in \Trees{u}$, let $\ell_i^T(\ba)$ denote the load on node $i$ given the realized arrivals $\ba$, when jobs that have not yet been served are routed along the tree $T$. (See Figure~\ref{fig:Routing}.)

\begin{numberedlemma}{\ref{lem:step1-Tree}}
If a fixed set of arrivals (resulting in a particular $\ba$) and realized paths $\Paths$ overload a
node $u$ in the network $G$, then  $\exists$ a tree $T\in\Trees{u}$
such that $u$ is overloaded with the same set of arrivals when
jobs are routed along $T$. Formally, 
$$\textstyle  \ell_u (\Paths) \ge B_u \quad\text{ implies that }\quad \max_{T\in
  \Trees{u}} \ell_u^T (\ba) \ge B_u.$$
\end{numberedlemma}

The proof of Lemma \ref{lem:step1-Tree} proceeds in several steps.

\subsubsection*{Step 1: Remove cycles}

Throughout the argument we will progressively modify the realized paths of jobs, while maintaining the invariant that every node $i$ must process at least $B_i$ jobs before forwarding any jobs. To this end, we say that a set of paths $\realPaths'=(P'_j)$ is {\bf valid for arrivals $\ba$} if there is 
an ordering of arrival and forwarding events
for the realized jobs consistent with the arrivals $\ba$, such that the realized path of each job $j$  is exactly $P'_j$, and $P'_j$ is a path in $G$.

For a directed multi-graph $G'$, let $\inn_{G'}(i)$ and $\outt_{G'}(i)$ denote the in- and out-degrees, respectively, of node $i$ in the multi-graph.  We first show that a set of paths $\realPaths'$ is valid for $\ba$ if and only if the multi-graph given by the union of the paths, call it $G'$, satisfies the following {\bf min-work condition}: \begin{equation}
 \label{minWork}
\forall i, \quad \outt_{G'}(i) \le \max (0, \inn_{G'}(i) + a_i - B_i).
\end{equation}

\begin{claim}
\label{claim:Cycles}
A multi-graph $G'$ can be decomposed into set of paths that is valid for arrivals $\ba$ if and only if it satisfies the min-work condition~\eqref{minWork}.
\end{claim}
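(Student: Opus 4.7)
The claim is a local-to-global characterization: existence of a valid path decomposition of the multi-graph $G'$ is equivalent to a node-level inequality on in-edges, out-edges, arrivals and capacities. I would prove the two directions separately.

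For the necessity direction ($\Rightarrow$), I would count tokens at each node. Fix a node $i$; if $\outt_{G'}(i)=0$ the inequality is trivial. Otherwise some path in the valid decomposition leaves $i$ along an out-edge, which in the process can happen only after $i$ has served $B_i$ jobs. Since each of the $a_i+\inn_{G'}(i)$ tokens reaching $i$ is either served or forwarded, and never both, we get $B_i+\outt_{G'}(i)\le a_i+\inn_{G'}(i)$, which is exactly the min-work inequality.

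For the sufficiency direction ($\Leftarrow$), I would give an explicit construction. At each node $i$, pair each of the $\outt_{G'}(i)$ out-edges with a distinct in-token (an external arrival among the $a_i$ arrivals, or an in-edge of $G'$ at $i$); min-work ensures that at least $\min(B_i,\, a_i+\inn_{G'}(i))$ in-tokens are left over to serve as path terminations at $i$ (served if capacity is still available, or giving up otherwise). Following these pairings out of each external arrival traces a walk through $G'$, and the collection of such walks together covers every edge of $G'$; this is the candidate $\realPaths'$. To certify $\realPaths'$ is valid I would exhibit a consistent schedule of arrival and forwarding events, which is available because the process allows adversarial ordering: processing events in any topological order along the walks --- with each node $i$ serving its first $B_i$ arrivals and routing subsequent arrivals along their paired out-edges --- satisfies the network-process semantics.

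The principal obstacle I anticipate is ensuring that the pairing step produces finite walks rooted at external arrivals rather than walks that loop forever around some cycle of $G'$. I would resolve this by a careful choice of pairings together with an inductive argument: summing the min-work inequality around any cycle $C$ in $G'$ and cancelling contributions of intra-cycle edges yields $|E_{\mathrm{out}}(C)|+\sum_{v\in C}B_v \le \sum_{v\in C}a_v+|E_{\mathrm{in}}(C)|$, which implies that every cycle reached by some external arrival (i.e., with $\sum_{v\in C}a_v+|E_{\mathrm{in}}(C)|>|E_{\mathrm{out}}(C)|$) contains a vertex at which the in-tokens strictly exceed the out-edges, giving a natural termination point for a walk. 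Rerouting the match at that vertex breaks the cycle, and induction on the number of edges of $G'$ completes the decomposition; the remaining degenerate case, an isolated cycle with $\sum_{v\in C}a_v=|E_{\mathrm{in}}(C)|=|E_{\mathrm{out}}(C)|=0$, forces all $B_v=0$ on $C$ and cannot arise from any genuine realization of the process, so can be excluded.
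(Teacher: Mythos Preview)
Your necessity argument matches the paper's (which just calls it ``trivial from the definition of valid paths'').

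For sufficiency the paper takes a much shorter route than your pairing-and-cycle-breaking plan. It sets $\di := a_i + \inn_{G'}(i) - \outt_{G'}(i)$, builds an $s$--$t$ flow network with capacities $a_i$ on $(s,i)$, $\di$ on $(i,t)$, and the $G'$-multiplicities on internal edges, observes that saturating every edge is a valid flow (conservation $a_i + \inn_{G'}(i) = \di + \outt_{G'}(i)$ holds by definition of $\di$), and decomposes this flow into $s$--$t$ walks. Stripping $s$ and $t$ from each walk yields a job path; the min-work inequality forces $\di \ge B_i$ wherever the out-degree is positive, which certifies validity. This sidesteps the explicit pairing construction and the inductive cycle-breaking entirely.

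Your plan is workable in spirit, but the cycle-breaking step as stated has a gap. You equate ``$C$ is reached by some external arrival'' with the strict inequality $\sum_{v\in C} a_v + |E_{\mathrm{in}}(C)| > |E_{\mathrm{out}}(C)|$, and then use that to locate a vertex on $C$ where in-tokens strictly exceed out-edges. But take $G'$ on $\{v_1,v_2,v_3\}$ with edges $v_1\to v_2$, $v_2\to v_1$, $v_1\to v_3$, arrivals $\ba=(1,0,0)$, and capacities $B=(0,0,B_3)$: min-work holds, the cycle $C=\{v_1,v_2\}$ is reached by the arrival at $v_1$, yet $\sum_{v\in C} a_v + |E_{\mathrm{in}}(C)| = 1 = |E_{\mathrm{out}}(C)|$, and at both $v_1$ and $v_2$ in-tokens equal out-edges. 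The valid decomposition here is the single walk $v_1\to v_2\to v_1\to v_3$, but your criterion does not pick out any termination vertex on $C$, so the induction as you describe it stalls on this instance. The flow argument handles such cases automatically by splicing the reachable cycle into the $s$--$t$ walk through $v_1$; you would need an analogous splicing device to complete your construction.
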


\begin{proof}
The ``only-if'' direction of the statement follows trivially from the definition of valid paths. For the ``if'' direction, define
\begin{equation}
\label{Conservation}
\di := a_i + \inn_{G'}(i) - \outt_{G'}(i),
\end{equation} 
which is the number of departures at node $i$. (These departures can occur either because a job is processed at $i$ or because its path terminates.)
Clearly $\sum_i a_i = \sum_i \di$.
Construct an $s-t$ flow network, where there is an edge of capacity $a_i$ from $s$ to node $i$,
an edge of capacity $\di$ from node $i$ to $t$, and an edge
from $i$ to $j$ of capacity equal to the number of edges in $G'$ from $i$ to $j$.
Clearly, there is an integer flow in which each edge is filled to capacity, and therefore,
this flow can be decomposed into paths. 

Moreover, the paths are trivially consistent with the property that each node processes $\min(\di, B_i)$ jobs, since by \eqref{minWork}, any $i$ with positive outdegree satisfies
$$0 < \outt_{G'}(i) \le \inn_{G'}(i)  + a_i - B_i,$$
or in other words, $$\di = a_i + \inn_{G'}(i) - \outt_{G'}(i) \ge B_i.$$
\end{proof}



\begin{cor}
 Let $\ba = (a_i)$ be a set of arrivals and $\realPaths$ a set of paths that are valid for $\ba$. Let $G'$ be the multigraph obtained by taking the union of paths in $\realPaths$.  Successively remove directed cycles from $G'$ to obtain a new acyclic multigraph $G''$. Then $G''$ can be decomposed into a set of paths $\realPaths'$ valid for the arrivals
$\ba$, with the property that a node $i$ is overloaded under
$\realPaths$ if and only if it is overloaded under $\realPaths'$.
That is, for all $i$,
\begin{equation}
\label{eqn:cycle}
\ell_i (\realPaths) \ge B_i \longrightarrow \ell_i(\realPaths') \ge B_i 
\end{equation}
and
$$\ell_i (\realPaths) <  B_i \longrightarrow \ell_i(\realPaths') < B_i.$$
\end{cor}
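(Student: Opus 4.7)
The plan is to combine two ingredients: the characterization of valid path decompositions via the min-work condition (Claim~\ref{claim:Cycles}), and a careful tracking of how node loads evolve under successive cycle removals. The key identity to use throughout is that for any multigraph $G'$ arising as the union of paths valid for arrivals $\ba$, the load at a node satisfies $\ell_i = a_i + \inn_{G'}(i)$; this depends only on $G'$ (and $\ba$), not on which decomposition into paths was chosen. Consequently, once I establish that $G''$ satisfies the min-work condition~\eqref{minWork}, Claim~\ref{claim:Cycles} will automatically provide a valid decomposition $\realPaths'$, and I can compare loads purely at the level of in-degrees.

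First I would verify by induction on the sequence of removals that the min-work condition is preserved at each step. Suppose $C$ is a directed cycle in the current multigraph. For any node $i \in C$, both $\inn(i)$ and $\outt(i)$ decrease by exactly $1$, while for $i \notin C$ neither changes. Since $i \in C$ forces $\outt(i) \ge 1$ before removal, the current min-work condition gives $\outt(i) \le \inn(i) + a_i - B_i$ (the max being attained by the positive argument), and subtracting one from each side shows the condition continues to hold at $i$ after removal. For $i \notin C$ the inequality is untouched. Hence $G''$ satisfies~\eqref{minWork}, and Claim~\ref{claim:Cycles} yields $\realPaths'$ valid for $\ba$.

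The main content is then the overload equivalence. Write $\ell_i^{(k)}$ for the load at $i$ after the $k$-th cycle removal, so that $\ell_i^{(0)} = \ell_i(\realPaths)$ and $\ell_i^{(\text{end})} = \ell_i(\realPaths')$. By the identity above, $\ell_i^{(k)} - \ell_i^{(k-1)}$ equals $-1$ if $i$ lies on the $k$-th removed cycle and $0$ otherwise. The crucial observation is that whenever $i$ lies on the removed cycle at step $k$, the current out-degree of $i$ is at least $1$, so the min-work condition (established inductively in the previous step) forces $\ell_i^{(k-1)} \ge B_i + 1$; thus $\ell_i^{(k)} \ge B_i$. Consequently, the predicate $\ell_i \ge B_i$ is preserved across the entire removal process: if $\ell_i(\realPaths) \ge B_i$ then each intermediate load is $\ge B_i$ and in particular $\ell_i(\realPaths') \ge B_i$. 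Conversely, if $\ell_i(\realPaths) < B_i$, then $\ell_i$ can never decrease (any decrease would require the pre-decrease value to be $\ge B_i + 1$), so $\ell_i(\realPaths') = \ell_i(\realPaths) < B_i$.

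The only subtle point, and what I would flag as the main obstacle, is that cycle removal is a multi-step process and the min-work condition must hold at each intermediate graph, not just at the endpoints. I expect this is handled cleanly by folding the min-work preservation and the overload invariance into a single induction on the number of cycles removed, so that at each step both the structural property (min-work) and the dynamic property (monotone preservation of $\ell_i \ge B_i$) are maintained simultaneously. Given that, the two implications of the corollary fall out directly from the per-step analysis described above.
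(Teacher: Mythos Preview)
Your proposal is correct and follows essentially the same approach as the paper: both preserve the min-work condition under cycle removal, invoke Claim~\ref{claim:Cycles} for the decomposition, and use the identity $\ell_i = a_i + \inn(i)$ to track overload status. The only cosmetic difference is that you argue step-by-step (at each removal, being on the cycle forces the pre-removal load to be $\ge B_i+1$), whereas the paper does a single global count (the total number of removed cycles through $i$ is at most $\outt_{G'}(i)$, so the min-work condition on $G'$ alone already gives the final load $\ge B_i$); both organizations are equally valid.
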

\begin{proof}
Prior to the removal of cycles, the  paths $\realPaths$ satisfied 
 the property \eqref{minWork}.
Since removing a cycle preserves the min-work property \eqref{minWork} and the resulting graph is acyclic, by Claim \ref{claim:Cycles}, its edges decompose into a valid set of realized paths. This is the set of paths $\realPaths'$. See Figure \ref{fig:Evolution}.

\begin{figure}[htbp]
\begin{center}
\includegraphics[width=\textwidth]{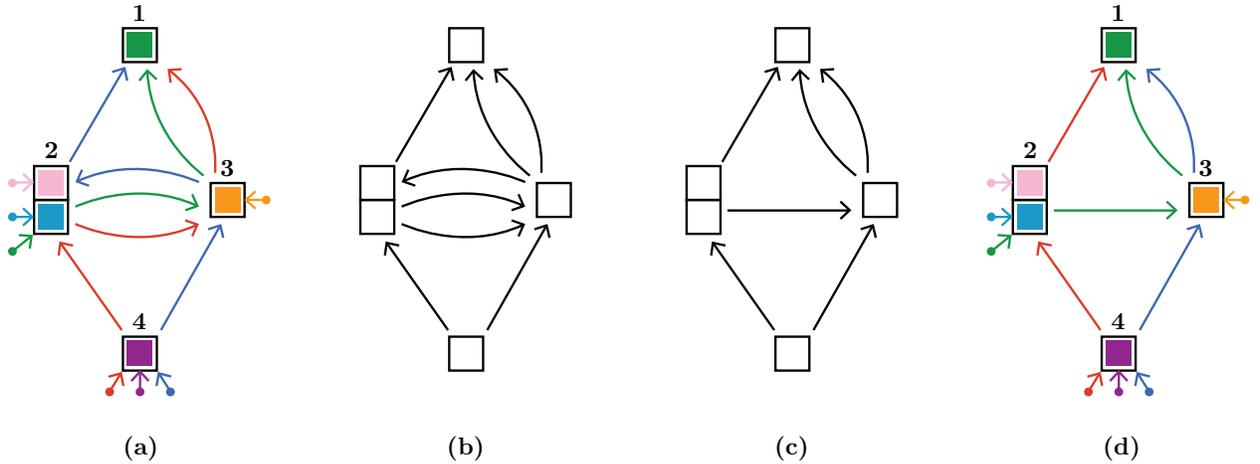}
\end{center}
\caption{
Figure (a) shows a set of arrivals and paths taken. Figure (b) is the induced multigraph $G'$. Figure (c) shows the multigraph $G''$
obtained after removing cycles from $G'$ (arrivals are not shown). Figure (d) 
shows  a decomposition of $G''$ into set of valid paths, as per
Claim \ref{claim:Cycles}. (This decomposition is not unique.) All overloaded
nodes are still overloaded.}
\label{fig:Evolution}
\end{figure}

Clearly the in-degree of each node is weakly decreasing. Furthermore,
nodes with out-degree $=0$ do not belong to any cycles, and so their
loads don't change. So we only need to show
that  (\ref{eqn:cycle}) holds for nodes with out-degree $>0$ after removing cycles.
This follows from the fact that the out-degree
of a node in $G'$ is an upper bound on the number of cycles removed that it is part of. By (\ref{minWork}), $\outt_{G'}(i) \ge k$
implies that $\inn_{G'}(i) + a_i \ge B_i+k$. 

Therefore, if  $k$ cycles through queue $i$ are removed during the process of removing cycles, the final load at $i$ (in-degree plus arrivals) is at least $B_i$, so (\ref{eqn:cycle}) holds.
\end{proof}

Via a similar argument, we also obtain the following corollary  that
will be used in the proof of Lemma \ref{lem:path-reduction-2}.
It allows us to ``short-cut'' paths without affecting overload events. 

\begin{cor}
\label{lem:short-cut}
Let $\ba = (a_i)$ be a set of arrivals and $\realPaths$ a set of
realized paths valid for $\ba$. Consider a path $P\in\realPaths$ of
length at least $2$, and let $(u_1, u_2)$ and $(u_2, u_3)$ be two
consecutive edges in this path. Let $P'$ be obtained by removing
(short-cutting) the vertex $u_2$ from $P$. That is,
$P'=P\setminus\{(u_1, u_2), (u_2, u_3)\}\cup\{(u_1, u_3\}$. Then, the
new set of paths $\realPaths'=\realPaths\setminus\{P\}\cup\{P'\}$ is
valid for $\ba$. Furthermore, for all nodes $i$, $\ell_i
(\realPaths)\ge B_i$ iff $\ell_i (\realPaths')\ge B_i$.
\end{cor}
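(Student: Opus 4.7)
The plan is to reduce both conclusions to a local degree-counting exercise on the edge-union multi-graphs of the two path collections and then appeal to Claim~\ref{claim:Cycles}. Let $G'$ and $G''$ be the multi-graphs corresponding to $\realPaths$ and $\realPaths'$ respectively. The only change from $G'$ to $G''$ is the deletion of the two edges $(u_1,u_2),(u_2,u_3)$ and the insertion of a single edge $(u_1,u_3)$. Consequently, at $u_1$ and $u_3$ both in- and out-degrees are preserved (one edge replaced by another at the same endpoint), at $u_2$ both $\inn$ and $\outt$ decrease by exactly one, and all other nodes are untouched.

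To establish validity of $\realPaths'$, I would verify the min-work condition~\eqref{minWork} for $G''$ and invoke Claim~\ref{claim:Cycles}. Min-work at every node other than $u_2$ is inherited directly from $G'$, so only $u_2$ needs attention. The crucial observation is that the edge $(u_2,u_3)$ was present in $G'$, which forces $\outt_{G'}(u_2)\ge 1$; combined with min-work at $u_2$ for $G'$ this yields $\inn_{G'}(u_2)+a_{u_2}-B_{u_2}\ge \outt_{G'}(u_2)\ge 1$. Subtracting $1$ from both sides gives $\inn_{G''}(u_2)+a_{u_2}-B_{u_2}\ge \outt_{G''}(u_2)\ge 0$, which is exactly the required inequality.

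For the overload equivalence, I would use $\ell_i(\realPaths)=a_i+\inn_{G'}(i)$. Every node $i\neq u_2$ has unchanged external arrivals and unchanged in-degree, so the load and hence the biconditional are immediate. At $u_2$ the load drops by exactly one, but the very same inequality used above, $\ell_{u_2}(\realPaths)\ge B_{u_2}+1$, guarantees $\ell_{u_2}(\realPaths')\ge B_{u_2}$, while $\ell_{u_2}(\realPaths)\ge B_{u_2}$ is automatic, so both sides of the ``iff'' hold simultaneously at $u_2$. The main subtlety to be careful about is the $\max$ in the min-work condition: if its right-hand side at $u_2$ collapsed to $0$ after the two degrees dropped by one, the derivation would fail. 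The strict positivity $\outt_{G'}(u_2)\ge 1$ coming from the edge being collapsed is precisely what rules out this degenerate case, and it is simultaneously the source of the load bound that drives the overload preservation.
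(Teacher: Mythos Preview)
Your argument is correct and is precisely the ``similar argument'' the paper alludes to: verify that short-cutting decreases both $\inn$ and $\outt$ at $u_2$ by one while leaving all other degrees unchanged, use the fact that $\outt_{G'}(u_2)\ge 1$ to keep the min-work condition \eqref{minWork} intact, invoke Claim~\ref{claim:Cycles} for validity, and observe that $u_2$ was overloaded by at least one unit so its overload status survives the unit drop in load. One small point worth making explicit is that Claim~\ref{claim:Cycles} only guarantees that $G''$ admits \emph{some} valid path decomposition, whereas the corollary asserts validity of the specific collection $\realPaths'$; since the load at every node and the min-work inequalities depend only on the multi-graph $G''$ and not on the particular decomposition, this distinction is immaterial for the uses of the corollary downstream, and the paper itself is equally informal on this point.
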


\subsubsection*{Step 2: Modify paths to obtain tree}

For the rest of this subsection, we will assume that we are given the
arrivals $\ba$, a set of valid paths $\realPaths$ that form a directed
acyclic graph, and a specific node $u$. Let $G'$ denote the
multi-graph formed by taking the union of the paths in
$\realPaths$. We complete the proof of Lemma \ref{lem:step1-Tree} by
modifying the paths so that they are directed along a tree rooted at
$u$, without decreasing the load on $u$. The modified paths will remain valid.

To this end, we will repeatedly use the following two operations:
\vspace{0.1in}

\noindent
{\em Operation 1:} Remove an edge  $(i,j)$ if $j$ has out-degree 0, and $j\ne u$.

\vspace{0.1in}
\noindent
{\em Operation 2: }  Suppose that $P_1$ and $P_2$ are two edge-disjoint paths that
start at $i$ and end at $j$, and there is a path from $j$ to $u$.
Delete path $P_2$ and replace it by a duplicate copy of $P_1$.

\vspace{0.1in}

\begin{figure}[htbp]
\begin{center}
\includegraphics[width=0.9\textwidth]{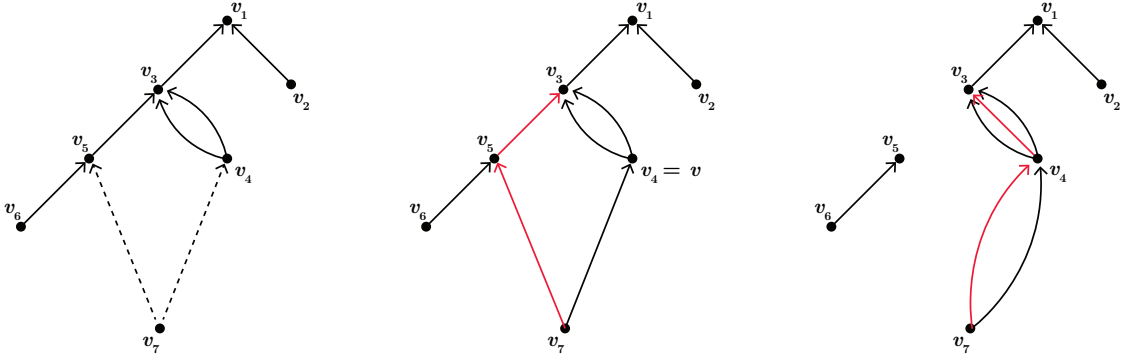}
\end{center}
\caption{
This figure shows an example of the application of inductive step when $i=7$. Initially (left figure) all paths from $v_1, \ldots, v_6$ form a tree directed towards $v_1$, and we are about to process $v_7$ which has edges to $v_4$ and $v_5$. Operation 2 is applied to the paths $P_1 = (v_7, v_4, v_3)$ and $P_2 = (v_7, v_5, v_3)$ resulting in the graph shown on the right. Subsequently, an application of operation 1 removes the edge $(v_6, v_5)$.  }
\label{fig:DirectedTree}
\end{figure}

\begin{figure}[htbp]
\begin{center}
\includegraphics[width=0.8\textwidth]{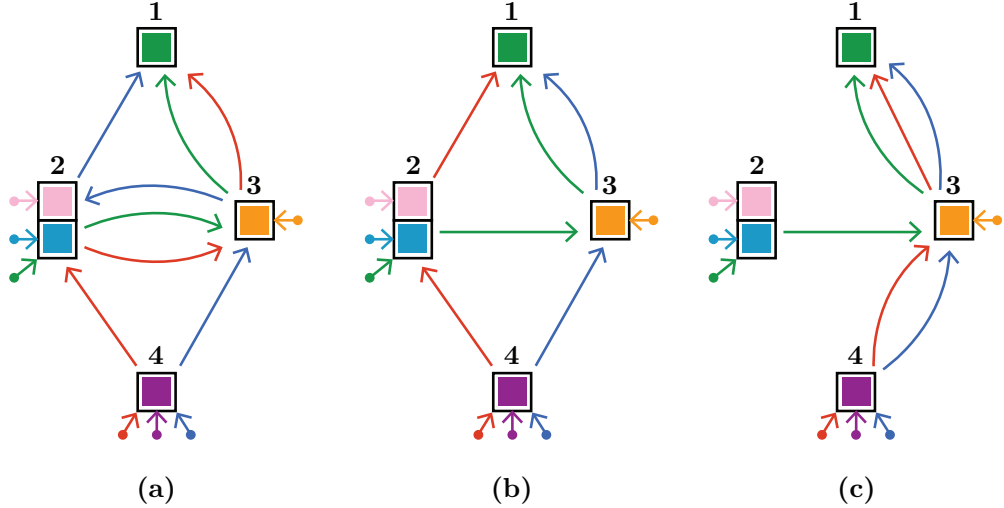}
\end{center}
\caption{
This figure shows the transformations applied to convert the paths into valid
paths along a tree (in this case tree $T_3$ from Figure \ref{fig:Routing}).
Going from the left graph to the middle is the result of removing cycles.
The right figure shows the paths obtained once we apply Step 2, which modifies paths to obtain a tree. In this example, the red path was rerouted to go through 3 rather than 2. The load at node 1 is preserved in the transformation from the middle routing to
the routing on the right. }
\label{fig:ColorfulRouting}
\end{figure}

If we begin with a set of paths valid for $\ba$, then by Claim
\ref{claim:Cycles}, operations 1 and 2 preserve the existence of a set
of realized paths that are valid for the arrivals $\ba$. Indeed,
operation 1 reduces the out-degree of a node. Operation 2 has the
following properties:
\begin{itemize}
\item It reduces both the in-degree and out-degree of some nodes by 1 (every node on $P_2$ except for $i$ and $j$),
which preserves \eqref{minWork}.
\item It increases both the in-degree and out-degree of some nodes by
  1 (nodes on $P_1$ except for $i$ and $j$), but only nodes that already
had out-degree 1, which also preserves \eqref{minWork}.
\item It maintains the out-degree of $i$ and in-degree of $j$.
\end{itemize}

We apply these two operations to get our tree as follows: Recall that
$G'$ is acyclic, and consider the nodes in $G'$ in topological order
(from sinks to sources), say $v_1, \ldots, v_n$.  Let $S_i := \{v_1,
\ldots, v_i\}$. We inductively apply the above operations so that the
subgraph on $S_i$ consists of a collection of paths terminating at
$u$, for which the corresponding graph (not multigraph) is a tree
directed towards and rooted at $u$. (This tree could be empty.)

The base case is $i=0$ (or the empty set).
To extend from $S_{i-1}$ to $S_i$,
we do the following: If all of the out-edges from $v_i$ are to nodes with out-degree 0, then remove
all of these edges (applying operation 1).
Otherwise, suppose that $v_i$ has an edge to some vertex $v\in S_{i-1}$
from which there is a path $P$ to $u$. Pick such a $v$, and the associated path $P$.
Repeat the following two steps until $S_i$ satisfies the inductive hypothesis:
\begin{enumerate}
\item As long as there is an  edge $(v', v'')$ with $v'' \in S_i\setminus u$,  and $\outt_G (v'') =0$, apply operation 1 to remove
this edge.
\item If there is an edge $(v_i, v')$ where $v'\ne v$ and there is a path $P'$ from $v'$ to $u$, find the first node $j$ at which the paths $P$ and $P'$ intersect. Let $P_{v,j}$ be the prefix of $P$ terminating at $j$, and let $P_{v',j}$ be the prefix of $P'$ terminating at $j$. Apply operation 2 to the paths $P_1 := (v_i, P_{v,j})$,and $P_2 := (v_i, P_{v',j})$. See Figure \ref{fig:DirectedTree}.
(Note that if there is no path $P'$ from $v'$ to $u$, then by the inductive hypothesis, $v'$
has out-degree 0, which means the edge $(v_i, v')$ will be removed.)
\end{enumerate}

This argument completes the proof of Lemma \ref{lem:step1-Tree}.

\subsection{The decorrelation lemma}

We will now prove the decorrelation lemma that is needed in our
analysis. We use the notation $\stgeq$ to denote stochastic dominance.

\begin{numberedlemma}{\ref{lem:probmore}}
{\bf (Decorrelation lemma for the max function)}
	Let $g_{\ell}:\Re \rightarrow \Re$, $\ell=1, \ldots, k$ be any non-decreasing functions,  $X$ be any real valued random variable, and let $Y_1, \ldots, Y_k$ be independent and identically distributed random variables from the same distribution as $X$.
	Then, 
	\[ \textstyle \max_{\ell}\{g_{\ell} (Y_{\ell})\} \stgeq \max_{\ell}\{ g_{\ell}(X)\}  .\]
\end{numberedlemma}

\begin{proof}
	Since the $Y_{\ell}$'s are independent, for any $a$, 
\begin{align}
\label{decorr-1}
\Pr[\max_{\ell}\{g_{\ell} (Y_{\ell})\}\leq a ] & =
\prod_{\ell}\Pr[g_{\ell}(Y_{\ell}) \leq a],\\
\intertext{whereas, recalling that $g_{\ell}$ is non-decreasing for each $\ell$, and setting $x^*(a) = \min_{\ell} g_{\ell}^{-1}(a)$,\footnote{Here, $g_{\ell}^{-1}(a) = \sup\{x : g_{\ell}(x)\le a\}$.} }
\label{decorr-2}
\Pr[ \max_{\ell}\{ g_{\ell}(X)\} \leq a ] & = \Pr[X \leq x^*(a)] = \min_{\ell} \Pr[g_{\ell}(X) \leq a].
\end{align}
	Clearly, the RHS of \eqref{decorr-2} is larger for all $a$
        than the RHS of \eqref{decorr-1}. Therefore, the lemma follows. 
\end{proof}

We can further generalize the decorrelation lemma as follows.

\begin{lemma}
	\label{lem:probmoreFull}
	Let $h_k:\Re ^n \rightarrow \Re $, $k=1, \ldots, N$ be functions that are non-decreasing in each variable. Let $A_1, \ldots, A_n$ be independent (but not identically distributed random variables) and,
	for each $1\le i \le n$, let
	$P_i: [N] \rightarrow [n_i]$, for nonnegative integers $n_i$. Then, 
	\begin{equation}
	\label{FullSI}
	 \max_{1\le k  \le N}\{ h_k(\{A_{i, P_i(k)}\}_{i=1}^n)\}\stgeq\max_{1\le k  \le N}\{h_k (\{A_i\}_{i=1}^n)\} .
	 \end{equation}
	 where each $A_{i, P_i(k)}$ is an independent draw from the distribution of $A_i$.
\end{lemma}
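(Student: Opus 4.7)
My plan is to prove Lemma \ref{lem:probmoreFull} by induction on $n$, decorrelating one variable at a time and appealing to the single-variable decorrelation lemma (Lemma \ref{lem:probmore}) at each step.

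First, I would set up a sequence of intermediate random objects that interpolates between the RHS and LHS of \eqref{FullSI}. For $j = 0,1,\ldots,n$ and each $k\in[N]$, define
\[
H_k^{(j)} \;:=\; h_k\bigl(A_{1,P_1(k)},\,\ldots,\,A_{j,P_j(k)},\,A_{j+1},\,\ldots,\,A_n\bigr),
\]
so that $\max_k H_k^{(0)}$ is the RHS of \eqref{FullSI} and $\max_k H_k^{(n)}$ is the LHS. The plan is to show, for every $0\le j<n$, the one-step stochastic dominance
\[
\max_k H_k^{(j+1)} \stgeq \max_k H_k^{(j)},
\]
after which chaining together the $n$ inequalities yields the lemma.

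To prove the one-step claim, I would condition on every random variable other than the fresh copies in coordinate $j{+}1$; that is, condition on the $\sigma$-algebra $\mathcal{F}$ generated by $\{A_{i,\cdot}\}_{i\le j}$ and $\{A_i\}_{i>j+1}$. Because $h_k$ is non-decreasing in each argument, after conditioning each $H_k^{(j)}$ becomes $g_k(A_{j+1})$ and each $H_k^{(j+1)}$ becomes $g_k(A_{j+1,P_{j+1}(k)})$, where
\[
g_k(x) \;:=\; h_k\bigl(A_{1,P_1(k)},\ldots,A_{j,P_j(k)},\,x,\,A_{j+2},\ldots,A_n\bigr)
\]
is $\mathcal{F}$-measurable and non-decreasing in $x$. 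The only subtlety is that several indices $k$ may share the same value of $P_{j+1}(k)$, so they share the same copy $A_{j+1,P_{j+1}(k)}$; I would handle this by grouping. Let $C_m := \{k : P_{j+1}(k) = m\}$ for $m\in[n_{j+1}]$ and set $G_m(x) := \max_{k\in C_m} g_k(x)$, which is again non-decreasing in $x$. Then
\[
\max_k g_k(A_{j+1,P_{j+1}(k)}) \;=\; \max_m G_m(A_{j+1,m}),
\qquad
\max_k g_k(A_{j+1}) \;=\; \max_m G_m(A_{j+1}),
\]
and the $A_{j+1,m}$ are i.i.d.\ copies of $A_{j+1}$, independent of $\mathcal{F}$. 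Applying Lemma \ref{lem:probmore} conditionally on $\mathcal{F}$ (with the non-decreasing functions $G_m$) gives
\[
\max_m G_m(A_{j+1,m}) \;\stgeq\; \max_m G_m(A_{j+1}) \quad\text{given }\mathcal{F}.
\]
Integrating over $\mathcal{F}$ then yields the unconditional one-step dominance.

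The main obstacle, and really the only delicate point, is the bookkeeping around the fact that $P_{j+1}$ need not be injective: one has to repackage the family $\{g_k\}_{k\in[N]}$ into the per-copy maxima $\{G_m\}_{m\in[n_{j+1}]}$ before invoking Lemma \ref{lem:probmore}. Once this is in place, monotonicity of each $h_k$ in every coordinate guarantees that $g_k$ and therefore $G_m$ are non-decreasing in $x$, and the independence structure of the fresh copies $A_{j+1,m}$ from $\mathcal{F}$ makes the conditional application of Lemma \ref{lem:probmore} legitimate. Chaining the one-step stochastic dominances $\max_k H_k^{(0)} \stleq \max_k H_k^{(1)} \stleq \cdots \stleq \max_k H_k^{(n)}$ completes the proof.
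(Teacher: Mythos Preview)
Your proposal is correct and follows essentially the same approach as the paper: induct on the number of decorrelated coordinates, condition on all other variables, group the functions by the value of $P_{j+1}(k)$ so that each group shares a single fresh copy, and then invoke Lemma~\ref{lem:probmore} on the per-group maxima. The paper's proof uses the same interpolation $H_k^{(j)}$ (with slightly different indexing and notation $h'_k$, $f_\ell$ in place of your $g_k$, $G_m$), and your treatment of the conditioning and of the non-injectivity of $P_{j+1}$ is, if anything, a bit more explicit than the paper's.
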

\begin{proof}
We prove by induction on $j$ that
$$	 \max_{1\le k  \le N}\left[h_k\left(\{A_{i, P_i(k)}\}_{i=1}^j, \{A_i\}_{i=j+1}^n\right)\right]\stgeq\max_{1\le k  \le N}\{h_k\left(\{A_{i, P_i(k)}\}_{i=1}^{j-1}, \{A_i\}_{i=j}^n\right)\} .$$
The base case of $j=0$ is immediate.
For the induction step, condition on all variables other than $A_j$ (which are independent of $A_j$), and define
$$h'_k( X) := h_k\left(\{A_{i, P_i(k)}\}_{i=1}^{j-1},X, \{A_i\}_{i=j+1}^n\right) | \{A_{i, P_i(k)}\}_{i=1}^{j-1}\text{ and } \{A_i\}_{i=j+1}^n.$$
Thus, it suffices to show that
\begin{equation}
\label{eqn:Gen}
\max_{1\le k  \le N}\left[h'_k\left(A_{j, P_j(k)}\right)\right]\stgeq\max_{1\le k  \le N}\{h'_k\left(A_j\right)\} .
\end{equation}
Letting
$$f_{\ell} (A_j) := \max_{1\le k \le N\text{ s.t. }P_j(k) = \ell} h'_k (A_j),$$
showing (\ref{eqn:Gen})
is the same as showing that
$$\max_{1\le \ell \le n_j} f_{\ell} (A_{j,\ell}) \stgeq \max_{1 \le \ell\le n_j} f_{\ell} (A_j),$$
which follows directly from Lemma \ref{lem:probmore}.
\end{proof}

\subsection{Reducing to a tree of trees: proof of Lemma~\ref{SDApp}}

\begin{numberedlemma}{\ref{SDApp}}
For each $i$ and directed simple path $P$ from $i$ to $u$, let
$A_{i,P}$ be an independent draw from the distribution of $A_i$, 
$P_i(T)$ be the unique path from $i$ to $u$ in tree $T$, and
$\stgeq$ denote stochastic dominance. Then
$$\textstyle \max_{T \in \Trees{u}}  \ell_u^T \left((A_{i, P_i(T)})_{i \in T}\right)\stgeq\max_{T \in \Trees{u}}\ell_u^T (\bA).$$

\end{numberedlemma}

\begin{proof}
  Apply Lemma \ref{lem:probmoreFull} with $N$ equal to the number of distinct trees $T$ rooted at $u$,
$n$ equal to the number of queues in the queueing network,
$h_k(\bA):= \ell_u^{T_k}(\bA)$, i.e., the load on $u$ when the each job follows the routes given by tree $T_k$ until it is processed, and $n_i$ equal to the number of distinct simple paths from $i$ to $u$.
\end{proof}
\section{Proofs for Section~\ref{sec:temporal}}
\label{sec:temporal-full}

\subsection{Properties of the graph $D$}

We begin by proving properties of the directed graph $D$. Recall that $D$ 
contains three types of edges. For every time slot $t\in \integers_+$, the 
``left forward'' edges connect $t$'s left parent $\ell(t)$ to $t$; the ``right 
forward'' edges connect $t$'s right parent $r(t)$ to it. The ``backward'' edges 
are defined as $E_B: = \{ (b(t), t) \quad \forall t\in T\}$
where $b(t)= \min \{s>t: p_s= p_t\}$. Recall that every node $t\in \integers_+$ 
in this graph has in-degree at most $3$. (See Figure \ref{fig:BoxesRedBlue}.)

Recall that $\Anc(t)$ denotes the ancestors of time slot $t$ in $\tilde D$ i.e. $\Anc(t) = \{s \text{ such that there is a path in }\tilde D\text{ from } s\text{ to }t\}.$

\begin{lemma} 
\label{lem:PropertiesAncestors}
\begin{enumerate}
\item[(a)] For any $t$, either there is an edge in $\tilde D$ from $\ell(t)$ to $r(t)$ or vice versa.  That is, either $\ell(t) = \ell(r(t))$ or $r(t) = r(\ell(t))$.
\item[(b)] For any $t$, the set $\Anc(t)$ is totally ordered: if $t_1,t_2 \in \Anc(t)$, then either $t_1 \in \Anc(t_2)$ or $t_2\in \Anc(t_1)$.
\end{enumerate}
\end{lemma}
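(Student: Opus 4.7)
My approach is a direct case split on the comparison between $p_{\ell(t)}$ and $p_{r(t)}$, exploiting the asymmetry in the definitions (the left parent uses $\le$, the right parent strict $<$). If $p_{r(t)} < p_{\ell(t)}$, I would show $r(\ell(t)) = r(t)$: by hypothesis $r(t)$ is a candidate for $r(\ell(t))$, and I would rule out any earlier candidate $s \in (\ell(t), r(t))$ by verifying $p_s \ge p_{\ell(t)}$ in each sub-range. At $s = t$ this is immediate ($p_t \ge p_{\ell(t)}$); for $\ell(t) < s < t$, maximality of $\ell(t)$ forces $p_s > p_t$; for $t < s < r(t)$, minimality of $r(t)$ forces $p_s \ge p_t$. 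The complementary case $p_{r(t)} \ge p_{\ell(t)}$ is symmetric and gives $\ell(r(t)) = \ell(t)$: $\ell(t)$ is itself a candidate for $\ell(r(t))$, and every $s \in (\ell(t), r(t))$ satisfies $p_s > p_{r(t)}$ by the same three-way check, with the needed strictness supplied by $p_t > p_{r(t)}$.

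\textbf{Part (b).} My first step would be to observe that $\tilde D$ is a DAG. Every forward edge weakly increases price, and a right edge increases it strictly, so any directed cycle would force all prices along it to be equal, hence contain only left edges; but each left edge $(\ell(s), s)$ has $\ell(s) < s$, contradicting a cycle. With acyclicity in hand, I would induct on the height of $t$ in $\tilde D$ (the length of the longest path ending at $t$). Since $t$ has at most two forward parents, both drawn from $\{\ell(t), r(t)\}$, I would split into two inductive cases. If only one parent $u$ exists, then $\Anc(t) = \{t\} \cup \Anc(u)$ is a chain by the inductive hypothesis at $u$. If both exist, part (a) supplies a forward edge from one to the other; WLOG $\ell(t) \to r(t)$, so $\ell(t) \in \Anc(r(t))$ and hence $\Anc(\ell(t)) \subseteq \Anc(r(t))$, giving $\Anc(t) = \{t\} \cup \Anc(r(t))$, which is a chain by the IH at $r(t)$ with $t$ placed at the top.

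\textbf{Main obstacle.} The only mildly delicate step lies in part (a): the case split $p_{r(t)} < p_{\ell(t)}$ versus $p_{r(t)} \ge p_{\ell(t)}$ has to be aligned with the strict/non-strict conventions in the definitions of $r(\cdot)$ and $\ell(\cdot)$ so that the boundary case $p_{r(t)} = p_{\ell(t)}$ falls unambiguously on one side and the ``no intermediate candidate'' verification goes through with the correct inequality on each sub-interval. Once (a) is established and the DAG observation is made, part (b) collapses mechanically, because the only way two ancestors of $t$ could fail to be comparable would be through two incomparable forward parents of some common descendant, and (a) precisely rules this out at every node.
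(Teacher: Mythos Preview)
Your proposal is correct and follows essentially the same approach as the paper. For part (a) the paper performs the identical case split on whether $p_{\ell(t)} \le p_{r(t)}$ or $p_{\ell(t)} > p_{r(t)}$ (relegating the verification to a figure caption, whereas you spell out the three-interval check); for part (b) the paper inducts on $|\Anc(t)|$ rather than on height in the DAG, but uses exactly your key step---invoke part (a) to conclude that one parent is an ancestor of the other, so $\Anc(t)\setminus\{t\}$ collapses to the ancestor set of a single parent, and apply the inductive hypothesis there.
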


\begin{figure}[htbp]
\begin{center}
\includegraphics[width=0.8\textwidth]{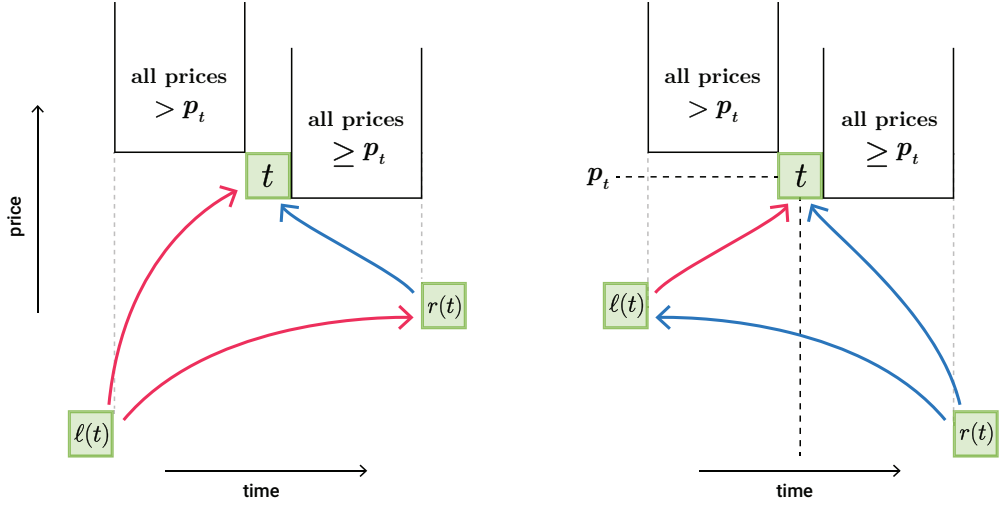}
\end{center}
\caption{
For a particular time slot $t$, $\ell(t)$ is the largest time less than $t$ where the price is at most $p_t$, and $r(t)$ is the smallest time greater than $t$ at which the price is strictly less than $p_t$. Notice that if $p_{\ell(t)} \le p_{r(t)}$, then by definition $\ell(t) = \ell (r(t))$. The right hand side shows the case where $p_{\ell(t)} > p_{r(t)}$. When this happens, then by definition $r(t) = r (\ell(t))$.}
\label{fig:BoxesGreater4}
\end{figure}

\begin{proof}
\begin{enumerate}
\item[(a)] See proof in caption of Figure \ref{fig:BoxesGreater4}.
\item[(b)] Proof by induction on $k = |\Anc(t)|$. For all $t$ with $|\Anc(t)| =1$, the statement is immediate. For $k > 1$,  suppose wlog that  $\ell(t) = \ell(r(t))$. Then all  ancestors of $t$ (other than $t$ itself) are ancestors of $r(t)$, and therefore, the total order on $\Anc(t)$ must terminate with $r(t)$ followed by $t$. Applying the induction hypothesis to $\Anc(r(t))$ completes the argument. See Figure~\ref{fig:Shortcutting} for an example of the total ordering.
\end{enumerate}
\end{proof}

\subsection{Properties of the paths ${\tilde P}_j$: proof of Lemmas~\ref{lem:path-reduction-1} and \ref{lem:path-reduction-2}}

Recall that we define job $j$'s new path ${\tilde P}_j$ as follows. Let $y_j$ and $z_j$ denote the first and last nodes on the job's realized path $P_j$. Recall that $P_j^1$ denotes the prefix of $P_j$ which visits nodes $t\in \favj$ with $t<y_j$, starting at $y_j$, and $P_j^2$ denotes the remaining suffix of the path. We begin with a simple property of the suffix $P_j^2$.

\begin{figure}[htbp]
\begin{center}
\includegraphics[width=0.4\textwidth]{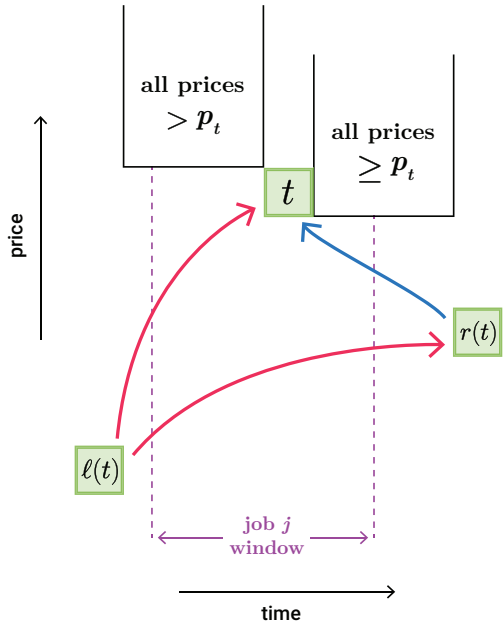}
\end{center}
\caption{
This figure illustrates the proof of Lemma \ref{lem:path-prop-1}.}
\label{fig:PricesGreater2}
\end{figure}

\begin{lemma}
\label{lem:path-prop-1}
  Let $t$ be a node in the path $P_j^2$ for some job $j$. Then, either $\ell(t)$ or $r(t)$ belongs to the job's window $W_j$, and appears before $t$ on the realized path $P_j$.
\end{lemma}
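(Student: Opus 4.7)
My plan is to split on whether $t$ is a favorite slot of $j$ or not, and in each case identify which of $\ell(t), r(t)$ lies in $W_j$ by a geometric argument about prices, then verify it appears before $t$ by appealing to the visit order in $\Pathj$.

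First I would dispose of the easy case where $t\in\favj$. Since $t\in P_j^2$ and $t\in\favj$, we must have $t>y_j$. Because $y_j\in\favj$ shares its (minimum) price with $t$, the definition of $\ell(t)=\max\{s<t:p_s\le p_t\}$ forces $\ell(t)\ge y_j$, so $\ell(t)\in[y_j,t-1]\subseteq W_j$. Moreover $p_{\ell(t)}\le p_t=p_{y_j}$, and since $p_{y_j}$ is the minimum price in $W_j$, equality holds, so $\ell(t)\in\favj$ too; the tie-breaking rule ($y_j$ first, then favorites $>y_j$ in increasing time order) then visits $\ell(t)$ strictly before $t$.

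Next, I would handle the case $t\notin\favj$ by a contrapositive-style geometric observation: by the definitions of $\ell(t)$ and $r(t)$, every slot strictly between $\ell(t)$ and $r(t)$ has price $\ge p_t$ (strictly $>p_t$ on the left half). If neither $\ell(t)$ nor $r(t)$ belonged to the interval $W_j$, then $W_j$—being contiguous and containing $t$—would sit entirely inside $(\ell(t),r(t))$, forcing every price in $W_j$ to be at least $p_t$. This contradicts the existence of $y_j\in\favj\cap W_j$, since $t\notin\favj$ implies $p_{y_j}<p_t$. Hence at least one of $\ell(t),r(t)$ lies in $W_j$.

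To finish, I would check that whichever of the two is in $W_j$ actually appears before $t$ on the realized path. If $r(t)\in W_j$, we have $p_{r(t)}<p_t$ by definition of $r$, so $r(t)$ is visited in an earlier price level of $\Pathj$. If $\ell(t)\in W_j$, then $p_{\ell(t)}\le p_t$; strict inequality again places $\ell(t)$ in an earlier price level, while equality puts $\ell(t)$ and $t$ in the same (non-favorite) price level, where the tie-break orders slots by increasing time and $\ell(t)<t$. The only place I expect mild friction is being careful about the tie-breaking conventions across the favorite/non-favorite boundary, so I would state those conventions up front and invoke the "$\ell(t)<t$, equal price, both non-favorites" subcase explicitly rather than sweeping it in with the strict-inequality argument.
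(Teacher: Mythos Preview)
Your proposal is correct and follows essentially the same line as the paper's proof: both rest on the geometric observation that if neither $\ell(t)$ nor $r(t)$ lies in $W_j$, then $W_j\subseteq(\ell(t),r(t))$ and hence every price in $W_j$ is at least $p_t$, which forces $t$ to be a favorite slot visited in $P_j^1$; the ``appears before $t$'' step is then dispatched by price-level or tie-breaking order. The only organizational difference is that you split explicitly on whether $t\in\favj$ up front and handle that case by exhibiting $\ell(t)\in W_j$ directly, whereas the paper runs a single contrapositive and folds the favorite/non-favorite distinction into the contradiction; your version is slightly more explicit about the tie-breaking subcases, which the paper simply calls ``easy to see.''
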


\begin{proof}
  If neither $\ell(t)$ or $r(t)$ are in job $j$'s window, $W_j=[s_j, d_j]$, then all prices in the window are at least $p_t$, and all prices in $[s_j, t-1]$ are strictly larger than $p_t$. See Figure \ref{fig:PricesGreater2}. Since the path $P_j$ starts at a cheapest slot in the window, it must start at a slot of price $p_t$, at a time $t$ or later. That is, $t\in\favj$ with $y_j\ge t$.
If the path doesn't start at $t$ itself, then by the definition of $P_j^1$, $t$ belongs to $P_j^1$. Finally, if one of $\ell(t)$ or $r(t)$ lies in $W_j$, then it is easy to see that job $j$ must visit this slot before it visits $t$: either the price at this slot is smaller than that of $t$, or, in the case of $\ell(t)$, the prices are the same, but the job visits slots to the left of $t$ with prices $p_t$ before $t$.
\end{proof}

We now complete our description of the reduction from the temporal setting to the network setting by specifying the paths ${\tilde P}_j$ for each realized job $j$. Recall that we define ${\tilde P}_j^2 = P_j^2\cap \Anc(z_j)$ to be a ``short-cutting'' of the suffix $P_j^2$. Let $s_1$ be the first node on the path ${\tilde P}_j^2$. By Lemma~\ref{lem:path-prop-1}, one of the parents of $s_1$ lies in $P_j$. Call this parent $s_0$. Since $s_0$ appears before $s_1$ in $P_j$, it must be the case that $s_0\in P_j^1$. Let ${\tilde P}_j^1$ be the prefix of $P_j^1$ from $y_j$ to $s_0$. Define ${\tilde P}_j=\tilde P_j^1\cup {\tilde P}_j^2$ if $P_j^2$ is non-empty, and ${\tilde P}_j=P_j^1$ otherwise.

We will now prove that the new paths ${\tilde P}_j$ lie in the graph $D$. (See Figure \ref{fig:Shortcutting}.)

\begin{figure}[htbp]
\begin{center}
\includegraphics[width=0.5\textwidth]{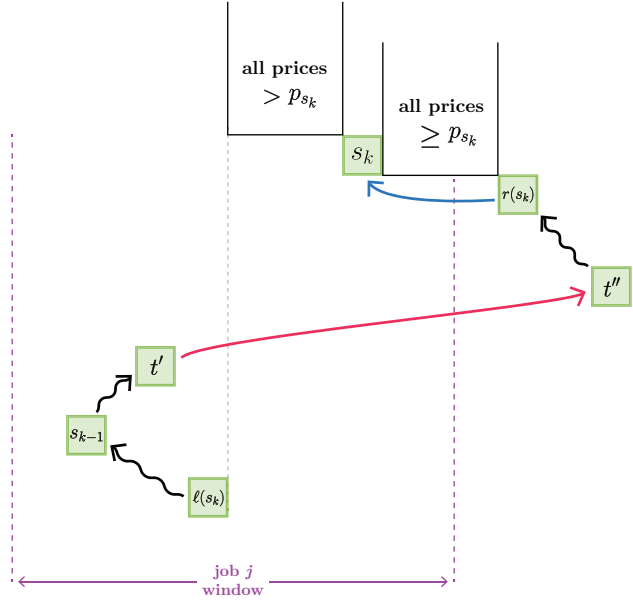}
\end{center}
\caption{
This figure illustrates the contradiction in the proof of 
Lemma \ref{lem:path-reduction-1}. Time $t'$ is less than time $\ell(s_k)$
and $p_{t'} > p_{\ell(s_k)}$. This contradicts the fact that $t' = \ell(t'')$.}
\label{fig:PricesGreater4}
\end{figure}

\begin{numberedlemma}{\ref{lem:path-reduction-1}}
Paths ${\tilde P}_j$ as defined above lie in the graph $D$.
\end{numberedlemma}

\begin{proof}
  Consider a job $j$. The path $\tilde P_j$ consists of three components: (1) a prefix of the path $P_j^1$, (2) the edge $(s_0,s_1)$, where $s_1$ is the first node on $\tilde P_j^2$ and $s_0$ is its ancestor on $P_j^1$, and, (3) the path ${\tilde P}_j^2 = P_j^2\cap \Anc(z_j)$. Observe that the edges in path $P_j^1$ are all backward edges. Therefore, they lie in the graph $D$. The edge $(s_0,s_1)$ lies in $\tilde D$ by construction, and by recalling that Lemma~\ref{lem:path-prop-1} implies that one of the parents of $s_1$ lies in $P_j^1$. We will now focus on the path ${\tilde P}_j^2 = P_j^2\cap \Anc(z_j)$. Let ${\tilde P}_j^2 = \{s_1, s_2, \cdots, s_k\}$, for some $k\ge 1$ where $s_k=z_j$.  We claim that for all $i\in \{1, \cdots, k-1\}$, $(s_i,s_{i+1})$ is an edge in $\tilde D$.

  We prove the claim by induction on the length $k$ of $\tilde P_j^2$: Consider the last node $s_k$ in $\tilde P_j^2$. Suppose, without loss of generality, that $\ell(s_k)= \ell(r(s_k))$.  We will prove that either $s_{k-1} = \ell(s_k)$ or $s_{k-1}=r(s_k)$.  Once this is proved, we simply use the fact that there is an edge $(s_{k-1}, s_k)$ in $\tilde D$, and complete the argument by applying the inductive hypothesis to $(s_1, \ldots, s_{k-1})$.

  Suppose then that $s_{k-1}$ is not $r(s_k)$ or $\ell(s_k)$. Since all ancestors of $s_k$ except $s_k$ itself are ancestors of $r(s_k)$, and $P_j$ doesn't go through $r(s_k)$, it must be that $r(s_k)$ is outside the job's window $W_j$. Thus, we have:
\begin{itemize}
\item $s_{k-1}$ is an earlier time than $\ell(s_k)$: this follows from the fact that all prices in $[\ell(s_k)+1, r(s_k)-1]$ are too high to come before $s_k$ in $P_j^2$; see Figure \ref{fig:PricesGreater4}.  

\item $\ell(s_k)$ is visited by $j$ prior to $s_{k-1}$: by Lemma~\ref{lem:path-prop-1}, since $r(s_k) \not\in P_j$, and $s_k\in P_j^2$, it must be that $\ell(s_k) \in P_j$ and is visited by $P_j$ before $s_k$, and therefore also before $s_{k-1}$. 

\item The price at $\ell(s_k)$ is strictly smaller than that at $s_{k-1}$: if the prices at the two slots were equal, noting that $s_{k-1}<\ell(s_k)$ would imply $s_{k-1}\in P_j^1$, but we know that $s_{k-1}\in P_j^2$.

\item There is a path $Q$ in $\tilde D$ from $\ell(s_k)$ to $s_{k-1}$ to $r(s_k)$ to $s_k$: this follows from the total order on ancestors of $s_k$, because all of these nodes (including $s_{k-1}$ by virtue of it being in $\tilde P_j^2$) are ancestors of $s_k$.

\item The path $Q$ contains an edge $(t', t'')$ where $t' < \ell(s_k)$ and $t'' \ge r(s_k)$.
\end{itemize}
But the final observation yields a contradiction: since the edge $(t', t'')$ goes left to right, it would have to be that $t' = \ell(t'')$. But that can't be, since $\ell(t)$ also has price less than $p_{t''}$ and is further to the right. See Figure \ref{fig:PricesGreater4}. This completes the proof.
\end{proof}

\begin{numberedlemma}{\ref{lem:path-reduction-2}}
The collection of paths $\tilde\realPaths=({\tilde P}_j)$, as defined
above, satisfies the min-work condition. Further, 
a node $t\in \integers_+$ is overloaded under the realized paths
$\tilde \realPaths$ if and only if it is overloaded under the realized
paths $\realPaths$. That is, $\ell_t(\realPaths)\ge B_t$ if and only if
$\ell_t(\tilde\realPaths)\ge B_t$.
\end{numberedlemma}

\begin{proof}
This lemma follows immediately by repeatedly applying Corollary~\ref{lem:short-cut}.

\end{proof}

\subsection{Proof of the stability of service theorem for unit-length jobs}

We are now ready to prove Theorem~\ref{thm:stability-cloud} for the special case of unit-length jobs. Lemmas~\ref{lem:path-reduction-1} and \ref{lem:path-reduction-2} imply that the (random) collection of paths $\tilde\realPaths$ forms a valid instance of the network of servers setting with graph $D$. It remains to argue that the moment generating function of $A_t-B_t$ for every node $t$ is small. Let $\{X_{j,t}\}$ be the fractional assignment given by Lemma~\ref{lem:fractional-assignment} for $\epsilon$ picked in the statement of the theorem.
Recall that $\sum_j q_j X_{j,t}\le (1-\epsilon)B_t$ and $A_t=\sum_j q_j \hat{X}_{j,t}$, where $\hat{X}_{j,t}$ is a Bernoulli random variable with expectation $X_{j,t}$. Then we have:
\begin{align*}
  \EE { e^ {\epsilon (A_t- B_t)}} & \le e^{-\frac 12 \epsilon^2 B_t}\\
& \le \epsilon^{c/2} <\epsilon ^2/3e
\end{align*}
for an appropriate choice of $c$. We can therefore apply Theorem~\ref{thm:Network-main} and the unit-length case of Theorem~\ref{thm:stability-cloud}  follows.

\section{The expected case LP: Proof 
of Lemma~\ref{lem:fractional-assignment}}
\label{sec:LP}

\begin{numberedlemma}{\ref{lem:fractional-assignment}}
{\bf (Fractional assignment lemma)}
Fix any set of potential jobs $J$, their arrival probabilities, and the 
capacities $B_t$ for all $t\in \integers_+$. Then for any $\epsilon \geq 0$,  
$\exists$ nonnegative prices 
$(p_t)_{t\in\integers_+}$ and a fractional assignment $X_{j,t}\in [0,1]$ from 
jobs $j\in J$ to their favorite slots $t\in \favj$,
such that,
\begin{enumerate}
  \item Every job that can afford to pay the price at its favorite slot is 
  fully scheduled: for every $j$ with $p_t(l_j)<v_j$ for $t\in\favj$, we have 
  $\sum_{t\in \favj} X_{j,t}=1$.
	\item The expected allocation at time $t$ is at most $(1-\epsilon)B_t$: 
	$\forall t, \sum_{j\in J, t' \in [t-l_j+1,t]  } q_j X_{j,t'} \leq 
	(1-\epsilon)B_t$. 
	\item The expected social welfare is at least $(1-\epsilon)$ times the 
	optimum: $\sum_{j \in J, t\in \favj } v_j q_j X_{j,t} \geq 
	(1-\epsilon)\opt$. 
\end{enumerate} 
Further, if the distribution is periodic, the prices are also periodic with the 
same period, and can be 
computed efficiently. 
\end{numberedlemma}
\begin{proof}
We begin by writing a linear program for the fractional assignment
problem. The variables in the program correspond to the fractional
assignment of jobs $j$ to slots $t$, $x_{jt}$, with the interpretation
that if job $j$ arrives, it is assigned with probability $x_{jt}$ to
the interval of time $[t,t+l_j-1]$. Fractional assignments must
satisfy two constraints: (1) every job is assigned with total
probability at most $1$; (2) the expected number of jobs assigned to a
slot $t$ is at most $B_t$. Together these constraints ensure that the
optimal solution to the LP satisfies the last two requirements of the
lemma. The prices we select are based on writing a dual for the program.



The linear program and its dual are given below. In the description
below, we assume that the time horizon for the allocation process is
given by $[H]$ for some large $H$. In the case (discussed below)
where jobs are drawn from a periodic distribution, we allow $H$ to go
to infinity. Recall that $W_j$ denotes the window of starting times
for a job; taking the time horizon into account, it is defined as $W_j
= [s_j, \min(d_j,H)-l_j+1]$.

\begin{center} 
\begin{tabular}{l|l}
{\bf Primal LP} & {\bf Dual LP} \\\\
{
$
  \begin{aligned}
   \text{Maximize } \sum_{j, t \in W_j}v_j q_j x_{jt} & \,\,\text{s.t. } &
   \\\\
\sum_{j} \sum_{t'\in [t-l_j+1, t]\cap W_j} q_j x_{jt'} & \le B_t(1-\epsilon) &
\forall t\in [H]\\
\sum_{t\in W_j} x_{jt} & \le 1 & \forall j\in J\\
x_{jt} & \ge 0 & \forall j\in J, t \in [H]
  \end{aligned}
$
}
&
{
$
\begin{aligned}
\text{Minimize } \sum_t \lambda_tB_t(1-\epsilon) & + \sum_j \mu_j
\,\,\text{s.t. } & \\\\
q_j\sum_{t'\in[t,t+l_j-1]}\lambda_{t'} + \mu_j & \geq v_jq_j & \forall
j\in J, t \in W_j  \\
\mu_j  & \geq 0 & \forall j \in J\\
\lambda_t & \geq 0 & \forall t \in [H]
\end{aligned}
$
}
\end{tabular}
\end{center}

\paragraph{Complementary Slackness conditions (CS).} Let $(x^*,
\lambda^*, \mu^*)$ denote the optimal solutions to the Primal and Dual
programs. Define $p_t := \lambda_t^*$, and observe that $p_t(l) =
\sum_{t'\in [t,t+l-1]} \lambda_{t'}^*$. The following complementary
slackness conditions hold:\footnote{The condition corresponding to
  $\lambda_t^*$ is not relevant to the proof of the lemma.}

\begin{enumerate}
\item For all $j\in J, t\in W_j$, either $x_{jt}^*=0$ or $p_t(l_j) +
  \mu_j^*/q_j = v_j$.
\item For all $j\in J$, either $\mu_j^*=0$ or $\sum_{t\in W_j} x_{jt}^*=1$.
\end{enumerate}

The first CS condition, along with the fact that $p_{t'}(l_j) + \mu_j^*/q_j
\ge v_j$ for all $t'\in W_j$ and $j\in J$, implies that if a job is
assigned (partially) to a starting slot $t$, i.e., $x_{jt}^*>0$, then
$p_t(l_j)\le p_{t'}(l_j)$ for all $t'\in W_j$. In other words, jobs
can only be assigned to one or more of their favorite slots.

Furthermore, if for a job $j\in J$, there exists a time $t\in W_j$
with $p_t(l_j)< v_j$, then we must have $\mu_j^*>0$. Thus, the second
CS condition implies that $\sum_{t\in W_j} x_{jt}^*=1$ and the job is
fully scheduled.

This completes the proof of the lemma.
\end{proof}

\paragraph{Periodicity.} We now prove that for periodic instances, we
can efficiently find periodic prices satisfying the conditions of the
lemma.

We begin by defining periodic instances.  A periodic instance with
period $k\in \integers_+$ is given by a core set $J_0$ of potential
jobs and their probabilities. Let the set $J_i$ be obtained by
shifting all the jobs in $J_0$ by $ki$ time units: \[\forall i \in
\integers_+, J_i := \left\{ (s_j+ki,d_j+ki,l_j,v_j) : j \in J_0
\right\} .\] The full set of potential jobs is defined to be
\[ J = \bigcup_{i=0}^\infty J_i. \] The associated probability for
each job is the same as that for the corresponding job in the core set
$J_0$.  Furthermore, supply is also periodic with period $k$: for all
$t=t' \pmod k$, $B_t=B_{t'}$.  A special case is that of
\emph{i.i.d. distributions}, which are simply periodic distributions
with period 1.

We will show next that for periodic instances, the above LP can be
simplified into a compact form. The compact LP assigns jobs in $J_0$
to time slots in $[k]$, with jobs and their windows ``wrapping
around'' the interval $k$. In particular, if $d_j-l_j+1\le k$, we
define the window of a job, as before, to be $\tilde W_j = [s_j,
d_j-l_j+1]$. Otherwise, if $d_j-l_j+1> k$ and the length of the
window, $|W_j|=d_j-l_j-s_j+2$, is smaller than $k$, we define $\tilde
W_j = [s_j, k] \cup [1, s_j+|W_j|-k]$. Finally, if $|W_j|\ge k$, we
define $\tilde W_j = [k]$.

Likewise, for a job $j$ scheduled fractionally at a time $t\in [k]$,
if $t+l_j-1>k$, the job ``wraps around'' the interval $[k]$
(potentially multiple times), and places
load on slots $t'\in [0, t+l_j-1-k]$. Specifically, a job with $l_j\ge k$
places a load of $\lfloor \frac{l_j}{k}\rfloor$ (times its fractional
allocation) on {\em every} slot in $[k]$, and an extra unit of load on
$t'\in [k]$ such that $(t'-t) \bmod k$ is less than or equal to $(l_j-1) \bmod k$.
For example, if a job of length $5$ in a setting with period $k=3$
starts at time slot $1$, it places a load of $2$ units on each of
slots $1$ and $2$, and a load of $1$ unit on slot $3$; if instead it
starts at time slot $3$, then it places a load of $2$ units on each of
slots $3$ and $1$, and a load of $1$ unit on slot $2$.

Accordingly, we obtain the following LP. To understand the capacity
constraint in this LP, consider the example of a setting with period
$3$, and a job $j$ of length $5$ with window $[1,3]$. Then, the load
placed by this job on slot $1$ is $2q_j x_{j1} + q_jx_{j2} + 2q_jx_{j3}$.
\begin{align*}
   \text{Maximize } \,\,\,\frac Hk\sum_{j\in J_0, t \in \tilde W_j}v_j q_j x_{jt}
   \,\,\text{s.t. } & &
   \\\\
\sum_{j\in J_0} \sum_{t'\in [k]\cap \tilde W_j} q_j \left\lfloor
\frac{l_j}{k}\right\rfloor x_{jt'} +
\sum_{j\in J_0} \sum_{\substack{t'\in \tilde W_j : \\  (t-t')\bmod k < l_j \bmod k}} q_j x_{jt'} & \le B_t(1-\epsilon) &
\forall t\in [k]\\
\sum_{t\in \tilde W_j} x_{jt} & \le 1 & \forall j\in J_0\\
x_{jt} & \ge 0 & \forall j\in J_0, t \in [k]
\end{align*}

Let $OPT_P$ and $OPT_A$ be the optimal values of the periodic and
aperiodic LPs given above, respectively. We now show that as $H$ goes
to infinity, $OPT_P \geq OPT_A$, and therefore using $OPT_P$ as a
benchmark can only make our approximation factor worse than what it
should actually be. Observe also that the dual of the periodic LP has
a price variable for every slot $t\in [k]$ --- the interpretation is
that prices repeat with periodicity $k$ and $p_{t'}=p_t$ for all $t'=t
\pmod k$.
\begin{claim}
$OPT_P \geq OPT_A$.
\end{claim}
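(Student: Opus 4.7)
The plan is to take an optimal solution $x^{*}$ to the aperiodic LP over the horizon $[H]$ and construct a feasible periodic solution $y$ by averaging $x^{*}$ over shifts, exploiting the fact that both the supply $B_t$ and the realized job set $J=\bigcup_i J_i$ are invariant under shifts by $k$. Concretely, for each core job $j\in J_0$ and slot $t\in[k]$ I would set
\[
y_{j,t} \;=\; \frac{k}{H}\sum_{i=0}^{H/k-1}\sum_{\substack{t'\in W_{j_i}\\ t'\equiv t\,(\bmod\,k)}} x^{*}_{j_i,\,t'},
\]
where $j_i\in J_i$ is the $i$th shifted copy of $j$ (with window $W_{j_i}=[s_j+ki,\,d_j+ki-l_j+1]$). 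The idea is that $y_{j,t}$ records the average fraction of the $H/k$ copies of $j$ that $x^{*}$ starts at a time congruent to $t\bmod k$.

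Next I would verify that $y$ is feasible for the compact periodic LP. The per-job constraint $\sum_{t\in\tilde W_j} y_{j,t}\le 1$ is immediate: the double sum telescopes to $\tfrac{k}{H}\sum_i \sum_{t'\in W_{j_i}} x^{*}_{j_i,t'}\le \tfrac{k}{H}\cdot(H/k)=1$, since each shifted copy $j_i$ satisfies $\sum_{t'} x^{*}_{j_i,t'}\le 1$. For the capacity constraint at slot $t\in[k]$, I would fix $t$ and sum the aperiodic capacity constraint $\sum_{j',t_0\in[\tau-l_{j'}+1,\tau]\cap W_{j'}} q_{j'} x^{*}_{j',t_0}\le (1-\eps)B_t$ over all $\tau\in[H]$ with $\tau\equiv t\pmod k$. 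The right-hand side sums to $(H/k)(1-\eps)B_t$. On the left, the coefficient of each $q_{j'}x^{*}_{j',t_0}$ is the number of integers $\tau\equiv t\pmod k$ in the interval $[t_0,t_0+l_{j'}-1]\cap[H]$, which (for $j'=j_i$ not too close to the boundary) equals exactly the compact-LP weight $\lfloor l_j/k\rfloor+\mathbf 1[(t-t')\bmod k<l_j\bmod k]$ with $t'=t_0\bmod k$. Dividing by $H/k$ and substituting the definition of $y$ recovers precisely the periodic capacity LHS, so $y$ satisfies the capacity constraints up to a correction coming only from the boundary intervals $[1,l_{\max})$ and $(H-l_{\max},H]$.

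The objective match is then a bookkeeping check:
\[
\tfrac{H}{k}\sum_{j\in J_0}\sum_{t\in\tilde W_j} v_j q_j\, y_{j,t} \;=\; \sum_{i}\sum_{j_i\in J_i}\sum_{t'\in W_{j_i}} v_{j_i} q_{j_i}\, x^{*}_{j_i,t'} \;=\; \mathrm{OPT}_A,
\]
since $v_{j_i}=v_j$ and $q_{j_i}=q_j$ by periodicity. Combined with feasibility, this would give $\mathrm{OPT}_P\ge \mathrm{OPT}_A$ up to the boundary slack.

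The only real obstacle is the boundary correction: copies of $j$ whose aperiodic window overlaps $\{1,\ldots,l_{\max}\}$ or $\{H-l_{\max}+1,\ldots,H\}$ contribute fractional weight to $y$ but their load-counting coefficients in the aperiodic sum can be smaller than the ``full'' periodic weight. I would handle this by scaling $y$ down by a factor $1-O(l_{\max}/H)$, which restores exact feasibility while shrinking the objective by the same factor; since the periodic LP value grows linearly in $H$ whereas the boundary loss is $O(l_{\max})$, taking $H\to\infty$ eliminates the gap and yields $\mathrm{OPT}_P\ge \mathrm{OPT}_A$. Alternatively one could restrict the averaging to indices $i$ whose window lies strictly inside $[l_{\max}+1,H-l_{\max}]$, losing only a vanishing fraction of the mass. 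Either route delivers the claim.
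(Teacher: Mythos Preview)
Your proposal is correct and follows essentially the same route as the paper: build a periodic solution by averaging the aperiodic optimum $x^*$ over the $H/k$ shifted copies of each core job, then verify the per-job and capacity constraints by summing the corresponding aperiodic constraints over residues. The paper defines exactly your $y$ (calling it $x^\dagger$), assumes $H$ is a multiple of $k$, and checks feasibility and the objective match by the same bookkeeping.

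One remark: your boundary concern is more cautious than strictly necessary. Because the aperiodic window is defined as $W_{j'}=[s_{j'},\min(d_{j'},H)-l_{j'}+1]$, any $t_0\in W_{j'}$ has $[t_0,t_0+l_{j'}-1]\subseteq[1,H]$, so the count $|\{\tau\in S_t:\tau\in[t_0,t_0+l_{j'}-1]\}|$ is \emph{exactly} $\lfloor l_j/k\rfloor+\mathbf 1[(t-t_0)\bmod k<l_j\bmod k]$ with no edge defect. The paper therefore gets the capacity constraint on the nose without any scaling or interior restriction. Your proposed fixes are harmless, just not needed here.
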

\begin{proof}
For simplicity we will assume that $H$ is a multiple of $k$.
For a job $j\in J_0$ and $j'\in J_i$ for some $i\in\integers_+$, we
say that $j$ is congruent to $j'$, written $j\cong j'$, if $j'$ is
obtained by shifting the arrival time and deadline of $j$ by a
multiple of $k$. For $t\in [k]$, let $S_t = \{t'\in [H]: t'=t\pmod k\}$. Observe that
$|S_t|=H/k$. 

Let $x_{jt}^*$ be the optimal solution to the aperiodic LP for the set
of all jobs $J$.
Consider the following
solution $\forall t \in [k]$ and $j\in J_0$: $x_{jt}^\dagger = 
\frac kH \sum_{j'\cong j} \sum_{t'\in S_t} x_{j't'}^*$.

The value obtained by the solution $x^{\dagger}$ in the periodic LP is
exactly equal to the value obtained by $x^*$ in the aperiodic LP. We
will now prove that $x^{\dagger}$ is feasible for the periodic LP,
which implies the lemma.

Since $x^*$ is feasible for the aperiodic LP, we have for all $j\in
J_0$:
\[
\sum_{t\in \tilde W_j} x_{jt}^{\dagger} = \sum_{t\in \tilde W_j} \frac
kH \sum_{j'\cong j} \sum_{t'\in S_t} x_{j't'}^* = \frac kH \sum_{j'\cong j} \sum_{t\in \tilde W_j} \sum_{t'\in S_t} x_{j't'}^* 
= \frac kH \sum_{j'\cong j} \sum_{t'\in W_{j'}} x_{j't'}^* \le \frac
kH \sum_{j'\cong j} 1 = 1.
\]

Likewise, we have for all $t\in [k]$:
\[
\sum_{t'\in S_t} \sum_{j} \sum_{t''\in [t'-l_j+1, t']\cap W_j} q_j 
x_{jt''}^* \le \frac Hk B_t(1-\epsilon)
\] 
Rearranging the sum, we get,
\[
\sum_{j\in J_0}q_j \frac kH\sum_{j': j'\cong j} \sum_{t'\in S_t} \sum_{t''\in
  [t'-l_j+1, t']\cap W_j} 
x_{j't''}^* \le B_t(1-\epsilon)
\]

Consider the inner sum $\sum_{t'\in S_t} \sum_{t''\in [t'-l_j+1,
  t']\cap W_j} x_{j't''}^*$. If $l_j<k$, this sum is exactly equal to
the sum over $t'\in\tilde W_j$ with $(t-t')\bmod k < l_j \bmod k$ of
$\sum_{t''\in S_{t'}} x_{j't''}^*$. If $l_j>k$, then each $t''$
belongs to the interval $[t'-l_j+1, t']$ for $\left\lfloor \frac
  {l_j}{k}\right\rfloor$ additional different $t'\in S_t$, and so, we get an
additional term of $\sum_{t'\in [k]}\sum_{t''\in S_{t'}} \left\lfloor
  \frac {l_j}{k}\right\rfloor x_{j't''}^*$. Putting these expressions
together, we get,

\[
\sum_{j\in J_0} q_j \frac kH\sum_{j': j'\cong j} \left(
\sum_{\substack{t'\in \tilde W_j : \\  (t-t')\bmod k < l_j \bmod k}}\sum_{t''\in S_{t'}} x_{j't''}^* + \sum_{t'\in [k]}\sum_{t''\in S_{t'}}
\left\lfloor \frac {l_j}{k}\right\rfloor x_{j't''}^* \right)
\le B_t(1-\epsilon),
\]
or,
\[
\sum_{j\in J_0} \sum_{t'\in [k]\cap \tilde W_j} q_j \left\lfloor
\frac{l_j}{k}\right\rfloor x_{jt'}^{\dagger} +
\sum_{j\in J_0} \sum_{t'\in [t-(l_j\bmod k)+1, t]\cap \tilde W_j} q_j x_{jt'}^{\dagger} \le B_t(1-\epsilon)
\]

Thus, the $x_{jt}^\dagger$'s 
form a 
feasible solution to the periodic LP and the lemma holds.
\end{proof}

\end{document}